\documentclass[a4paper,UKenglish,cleveref, autoref, thm-restate]{lipics-v2021}

\title{NP-Hardness and a Fixed-Parameter Algorithm for Translocation Distance} 

\titlerunning{NP-Hardness and a Fixed-Parameter Algorithm for Translocation Distance} 

\author{Maria Constantin}{Faculty of Mathematics and Computer Science, University of Bucharest, Romania}{maria.petruta.constantin@drd.unibuc.ro}{}{}
\author{Adrian Micl\u au\c s}{Faculty of Mathematics and Computer Science, University of Bucharest, Romania}{adrian.miclaus1@gmail.com}{}{}
\author{Alexandru Popa}{Faculty of Mathematics and Computer Science, University of Bucharest, Romania}{alexandru.popa@fmi.unibuc.ro}{}{}

\authorrunning{J. Open Access and J.\,R. Public} 

\Copyright{Jane Open Access and Joan R. Public} 

\ccsdesc[500]{Theory of computation~Pattern matching}

\keywords{Translocation distance, FPT algorithm, NP-hardness} 

\category{} 

\relatedversion{} 

\nolinenumbers 

\EventEditors{John Q. Open and Joan R. Access}
\EventNoEds{2}
\EventLongTitle{42nd Conference on Very Important Topics (CVIT 2016)}
\EventShortTitle{CVIT 2016}
\EventAcronym{CVIT}
\EventYear{2016}
\EventDate{December 24--27, 2016}
\EventLocation{Little Whinging, United Kingdom}
\EventLogo{}
\SeriesVolume{42}
\ArticleNo{23}

\usepackage{amsthm}
\usepackage{amsmath}
\usepackage{amssymb}
\usepackage[utf8]{inputenc}
\usepackage{color}
\usepackage{algorithm}
\usepackage{algpseudocode}
\usepackage{tikz}
\usepackage{mathtools}
\usepackage{multirow}
\usepackage{comment}

\newtheorem{problem}{Problem}

\begin{document}

\maketitle

\begin{abstract}

In this paper we study the genome rearrangements done by translocation events. 
Genome rearrangements were used to measure evolutionary distance between organisms since 1936 (Dobzhansky and Sturtevant).
The chromosomes are represented as strings of DNA and the \emph{translocation operation} is defined as the exchange of prefixes between two strings. This operation results in the creation of two new strings (chromosomes) that can then be utilized in subsequent translocations. A translocation is referred to as \emph{contiguous} if the new strings are produced in a single copy, so each of them can be used in only one subsequent operation. When the words produced by a translocation operation are considered to have an infinite number of copies, the translocation is referred to as \emph{non-contiguous}. If the exchanged prefixes are of equal length, the translocation is called \emph{uniform}. Otherwise, the translocation is termed \emph{non-uniform}. The \emph{translocation distance} between two sets of strings, termed the input set and the target set, represents the minimum number of translocations necessary to obtain all the strings in the target set via translocation operations.

We prove that both the non-uniform contiguous and the non-uniform
non-contiguous translocation distance problems are NP-hard over arbitrary
finite alphabets, where the alphabet is part of the input. For the case in which the target set consists of a single string, we give a fixed-parameter tractable algorithm parameterized by the length of the target string.
\end{abstract}



\section{Introduction}

\subsection*{Motivation}

The genome encompasses the entirety of DNA molecules in a cell~\cite{hartl2011essential}. Genomes are organized into chromosomes, which are further divided into genes. Each gene is a specific sequence of DNA, which carries the vital information necessary for constructing other molecules~\cite{fertin}. DNA, the hereditary molecule, replicates itself, typically producing exact copies~\cite{hartl2011essential}. However, during replication, mutations may occur, leading to imprecise replicas and alterations in the genome over generations. This process drives the evolution of new species~\cite{zeira2019genome, oliveira2024}. 

Mutations involve the rearrangement of DNA within a gene and may be categorized according to the type of the observed rearrangement into: duplications, deletions, transpositions, inversions, translocations, fusions, fissions, etc. The number of rearrangements needed to derive the genome of one organism from that of another can indicate the evolutionary distance between organisms~\cite{Dobzhansky1936,Dobzhansky1938}. This number of rearrangements is referred to as the genome distance~\cite{fertin}.

The concept of utilizing genome rearrangements as an indicator of evolutionary distance between organisms was introduced by Dobzhansky and Sturtevant~\cite{Dobzhansky1936,Dobzhansky1938} in 1936. Later, in 1941, Sturtevant and Novitski~\cite{sturtevant1941homologies} studied the problem of determining the minimum number of inversions needed to transform one gene sequence into another, with the aim of explaining species evolution. However, their study is limited to sequences of at most nine genes, as examining sequences of ten or more genes using conventional methods becomes extremely challenging. In 1982, Watterson et al.~\cite{WATTERSON19821} overcome this limitation by introducing the use of combinatorial models to formally define the problem of evolutionary distance~\cite{fertin}. Specifically, they use permutations to represent the relative positions of genes along a chromosome. Their technique reduces the problem of determining the evolutionary distance to identifying the minimal number of inversions required to transform one permutation into another. Sankoff and Kececioglu are the first to obtain significant results~\cite{Kececioglu1993ExactAA,kececioglu1995mice,sankoff1992edit,sankoff1992gene,kececioglu1995exact,kececioglu1994efficient} regarding genome rearrangement distance by employing combinatorics. Their research is built upon the combinatorial model proposed by Watterson~\cite{WATTERSON19821} and introduces multiple exact and approximation algorithms for computing the evolutionary distance by inversions.

\subsection*{Informal Problem Definition}
In this paper, we consider the following problem. The input consists of a set of strings $A$. One translocation operation selects two strings from $A$ and two of their prefixes, then extends the set $A$ with two new strings obtained by interchanging the two prefixes. Given two sets of strings $A$ and $B$, the goal is to determine the minimum number of translocations to obtain a set $A'$ such that $B \subseteq A'$.
If, within each translocation operation, the two exchanged prefixes have
equal length, the translocation is called \emph{uniform}, otherwise, it is
called \emph{non-uniform}. The common prefix length may differ between
different translocation operations. A translocation is referred to as \emph{contiguous} if the new strings are produced in a single copy, so each of them can be used in only one subsequent operation. When the strings produced by a translocation operation are considered to have an infinite number of copies, the translocation is referred to as \emph{non-contiguous}.

\subsection*{Previous and Related Work}

Translocations have a long history in genome rearrangement theory, where genomes are typically modeled as unsigned or signed permutations of genes. In an unsigned permutation, each gene is represented only by its identity and its orientation is ignored, whereas in a signed permutation each gene is assigned a positive or negative sign indicating its orientation. In this classical setting, each gene appears exactly once, i.e. duplicates are excluded, and the goal is to transform one genome into another using a minimum number of rearrangement operations. In the classical permutation-based model, a translocation acts on two chromosomes of the current genome: the chromosomes are cut and exchange their corresponding end segments, producing two new chromosomes that replace the original ones. Thus, the chromosomes produced by one operation become part of the current genome and, if used in a subsequent translocation, are consumed and replaced again. In this sense, the classical model is analogous to our contiguous setting, where intermediate strings are available in a single copy. The models are not identical, however, since in our setting the strings in the initial set $A$ are assumed to have infinitely many available copies, and strings may contain repeated symbols.

The first scientific papers that use combinatorics to study the evolutionary distance via translocations are the ones of Kececioglu and Ravi~\cite{kececioglu1995mice}. They investigate the unsigned case of translocation distance. In their paper, Kececioglu and Ravi present a linear-time exact algorithm for the equal-length prefixes case of unsigned translocation distance. They also propose a $2$-approximation algorithm for the case in which
the two exchanged prefixes may have different lengths, with running time
$O(n^2)$, where $n$ denotes the number of genes. Even though Zhu and Wang~\cite{ZHU2006322} prove the NP-hardness of the unsigned translocation distance problem, it continues to be widely studied~\cite{cui20071,cui20081,jiang20141,SilveiraSLA15,pu20201}. Many papers focus on minimizing the performance ratio of the approximation algorithms for this problem, with the previously best ratio achieved thus far being $1.375$~\cite{pu20201}. The $1.375$ approximation frontier has recently been broken by a randomized FPT $(4/3+\varepsilon)$-approximation algorithm~\cite{sun2025randomized}. Hannenhalli's research~\cite{hannenhalli1996polynomial} focuses on the signed case of the translocation distance problem. He proposes a polynomial-time algorithm that runs in $O(n^3)$ and computes the exact signed translocation distance. However, subsequent studies provide improved algorithms for this problem. Wang et al.~\cite{WangZLM05} present an algorithm for signed translocation distance with a time complexity of $O(n^2)$, while Li~\cite{li2004linear} and Bergeron~\cite{bergeron2006sorting} achieve linear-time solutions. Additionally, there are papers which examine genome distance by translocations in conjunction with other operations, including translocations and inversions~\cite{kececioglu1995mice}; translocations, inversions, fusions, and fissions~\cite{hannenhalli1995transforming}; and translocations, inversions, and block interchanges~\cite{yancopoulos2005efficient}. Closely related models also continue to evolve, for instance, symmetric reversals~\cite{lai2025symmetric} and flanked block-interchanges~\cite{li2024flanked}.

A key distinction between the above line of research and our setting is the no-duplicates assumption: the permutation model inherently assumes a single copy of each gene~\cite{zeira2019genome}. Moreover, the contiguous/non-contiguous distinction considered in this paper is not explicitly defined in the classical permutation model.

Motivated by biological scenarios where duplicates cannot be ignored, Martin-Vide and Mitrana~\cite{Martin-VideM04} initiated the study of translocation distance in a string-based framework, in which a translocation swaps prefixes of two strings and the objective is to generate given target strings. Their results concern the uniform contiguous setting: they give an exact algorithm when the target set has size one and a $2$-approximation algorithm for an arbitrary target set~\cite{Martin-VideM04}. Constantin and Popa~\cite{constantin2019some,constantin2025exact} further study the contiguous model, giving an exact algorithm for the uniform translocation distance when the target set has size two and a $2$-approximation algorithm for the non-uniform translocation distance when the target set has size one. The heuristic approaches in~\cite{constantin2024simulated} also concern the non-uniform contiguous translocation distance. To the best of our knowledge, the non-contiguous string-based model has not been studied algorithmically before the present work.

\subsection{Our results}

Our work continues in the string framework of~\cite{Martin-VideM04} and focuses on variants that are not covered by the permutation literature. In particular, we study the non-uniform and non-contiguous formulations at the level of sets of strings.

\subsubsection*{NP-hardness (Section~\ref{sec:np-hard}}

To prove NP-hardness for the non-uniform translocation distance problems, we give a reduction from $(3,B2)$-SAT~\cite{berman}. We construct a target set containing a gadget substring for each variable and each clause, so that the translocation distance reflects the instance size. For every variable, the input set contains two disjoint subsets encoding the assignments \textsc{True} and \textsc{False}. Each subset can produce the same variable gadget, but only the subset consistent with the chosen assignment simultaneously produces exactly the clause strings satisfied by that assignment. The two subsets are designed to be incompatible, so no optimal solution can mix them when building the variable gadget, allowing us to read off the assignment from the performed translocations and complete the reduction.

\subsubsection*{FPT algorithm (Section~\ref{sec:fpt})}
When the target set has size one, i.e.\ $B=\{z\}$, we give an FPT algorithm for the non-uniform contiguous translocation distance parameterized by $q=|z|$. The algorithm has two main phases. First, we perform a kernelization step that replaces the initial input set $A$ by a bounded set $A'$. Each string from $A'$ is an alternating concatenation of pairwise-disjoint substrings of $z$. Intuitively, this abstraction keeps exactly the information relevant to forming $z$. We show (Lemma~\ref{lemma:kernel}) that this transformation preserves the optimum and, moreover, $A'$ is bounded only as a function of $q$. In the second phase, we solve the instance on the kernel by a bounded-depth exhaustive search (Algorithm~\ref{alg:backtrack}). Using Lemma~\ref{lemma:depth}, we never need to explore more than $q{+}1$ translocations. At each search node, we try all ordered pairs of strings in the current set and all boundary cut positions. Combining the depth bound with the kernel-size bound yields an FPT running time with respect to $q$.


\section{Preliminaries}
\label{sec:preliminaries}

In the current paper, we reuse the majority of the notation and definitions introduced by Martin-Vide and Mitrana~\cite{Martin-VideM04}. However, to ease referencing, we reproduce them here as well.

For motivation, chromosomes can be represented linearly as strings over the DNA alphabet $\{A,C,G,T\}$. Nevertheless, all definitions and results in this paper are stated for strings over an arbitrary finite alphabet $\Sigma$. In particular, in our hardness results the alphabet is part of the input and is not assumed to have constant size.

We start with basic notation related to strings. An alphabet $\Sigma$ is a finite, non-empty set of symbols. The empty string is denoted by $\epsilon$, and $\Sigma^{+}=\Sigma^{*}\setminus\{\epsilon\}$. Given a string $w\in\Sigma^*$, we denote by $|w|$ its length. If $w=xy$, where $x,y\in\Sigma^*$, then $x$ is called a prefix of $w$ and $y$ a suffix of $w$. For a string $w$ and $1\leq i\leq j\leq |w|$, $w[i\dots j]$ denotes the substring of $w$ starting at position $i$ and ending at position $j$. If $i>j$, then $w[i\dots j]=\epsilon$.

\begin{definition}[Translocation]
Given two strings $x,y\in\Sigma^+$ and two integers $i,j\in\mathbb{N}$ such that $0\leq i\leq |x|$ and $0\leq j\leq |y|$, the translocation operation swaps the prefix of $x$ of length $i$ with the prefix of $y$ of length $j$. Formally, $(x,y)\vdash_{i,j}(u,v)$ if and only if $x=x_1x_2$, $y=y_1y_2$, $u=y_1x_2$, $v=x_1y_2$, $|x_1|=i$, and $|y_1|=j$.
\end{definition}

A uniform translocation is a translocation in which the two exchanged prefixes have equal length, i.e.\ $i=j$. In contrast, a non-uniform translocation permits prefixes of arbitrary lengths. The common prefix length in a uniform translocation may differ between different translocation operations. Since the upcoming definitions do not depend explicitly on the lengths $i$ and $j$ of the exchanged prefixes, we simplify the notation by writing $\vdash$ instead of $\vdash_{i,j}$.

The translocation operation can be extended to a finite set of strings $A\subseteq\Sigma^+$ to describe the strings that can be obtained by translocating two strings from $A$: $TO(A)=\bigcup_{x,y\in A}\{u,v\mid (x,y)\vdash(u,v)\}$. In this and all subsequent definitions, we assume that every string $x\in A$ has infinitely many available copies and can therefore be used in any number of translocation operations.

The operation can also be iterated to describe all strings that can be
obtained starting from $A$ by repeatedly applying translocations:
$TO_0(A)=A$, $TO_{k+1}(A)=TO(TO_k(A))$, and
$TO_*(A)=\bigcup_{k\geq 0}TO_k(A)$.

Let $S=(s_1,s_2,\dots,s_n)$ be an ordered sequence of translocations, where $s_i=(x_i,y_i)\vdash(u_i,v_i)$ for every $i\in\{1,\dots,n\}$. We refer to $S$ as a translocation sequence. Whether $S$ is executable from the initial set $A$ depends on the availability rules of the contiguous or non-contiguous model, defined below.

Given a translocation sequence $S=(s_1,s_2,\dots,s_n)$, where $s_i=(x_i,y_i)\vdash(u_i,v_i)$, and a string $w\in TO_*(A)$, we define two functions. The value $P_i(S,w)$ denotes the number of copies of $w$ consumed during the first $i$ translocations of $S$, namely $P_i(S,w)=|\{j\leq i\mid x_j=w\}|+|\{j\leq i\mid y_j=w\}|$. The value $F_i(S,w)$ denotes the number of copies of $w$ initially available or produced during the first $i$ translocations of $S$. Since strings in the initial set $A$ are assumed to have infinitely many available copies, we set $F_i(S,w)=\infty$ if $w\in A$; otherwise, $F_i(S,w)=|\{j\leq i\mid u_j=w\}|+|\{j\leq i\mid v_j=w\}|$. For $i=0$, we have $P_0(S,w)=0$, while $F_0(S,w)=\infty$ if $w\in A$ and $F_0(S,w)=0$ otherwise.

A translocation sequence is called contiguous if, at each step, the input strings have enough available copies and each produced copy can be used at most once. Conversely, in the non-contiguous model, once a string has been produced, it may be reused in subsequent operations without restriction. These concepts are formally defined as follows.

\begin{definition}
A translocation sequence $S=(s_1,s_2,\dots,s_n)$ is contiguous from $A$, abbreviated CTS, if for every $i\in\{1,\dots,n\}$ the following conditions hold. If $x_i\neq y_i$, then $F_{i-1}(S,x_i)>P_{i-1}(S,x_i)$ and $F_{i-1}(S,y_i)>P_{i-1}(S,y_i)$. If $x_i=y_i$, then $F_{i-1}(S,x_i)\geq P_{i-1}(S,x_i)+2$.
\end{definition}

\begin{definition}
A translocation sequence $S=(s_1,s_2,\dots,s_n)$ is non-contiguous from $A$, abbreviated NCTS, if, for every $i\in\{1,\dots,n\}$, $F_{i-1}(S,x_i)>0$ and $F_{i-1}(S,y_i)>0$.
\end{definition}

\begin{definition}
Given a CTS $S=(s_1,s_2,\dots,s_n)$ from $A$ and a finite set of strings $B=\{z_1,z_2,\dots,z_m\}$, we say that $S$ is $B$-producing if $F_n(S,z_j)>P_n(S,z_j)$ for every $j\in\{1,\dots,m\}$.
\end{definition}

Thus, a contiguous translocation sequence $S$ is $B$-producing if, after executing all translocations in $S$, at least one available copy of every string in $B$ remains.

\begin{definition}
Given a NCTS $S=(s_1,s_2,\dots,s_n)$ from $A$ and a finite set of strings $B=\{z_1,z_2,\dots,z_m\}$, we say that $S$ is $B$-producing if $F_n(S,z_j)>0$ for every $j\in\{1,\dots,m\}$.
\end{definition}

Thus, a non-contiguous translocation sequence $S$ is $B$-producing if, for every string $z\in B$, either $z$ belongs to the initial set $A$ or at least one copy of $z$ is produced by a translocation in $S$.

The non-uniform contiguous translocation distance from a finite set of strings $A$ to a finite set of strings $B$ is defined as $TD_{NUC}(A,B)=\min\{|S|\mid S \text{ is a contiguous $B$-producing translocation sequence from } A\}$, where arbitrary prefix lengths are allowed in every translocation. Similarly, the non-uniform non-contiguous translocation distance is defined as $TD_{NUNC}(A,B)=\min\{|S|\mid S \text{ is a non-contiguous $B$-producing translocation sequence from } A\}$. If no such $B$-producing sequence exists, the corresponding distance is defined to be $\infty$. If every translocation is instead restricted to exchange prefixes of equal length, then the corresponding translocation distance is called uniform.

We are now ready to introduce the problems studied in this paper.

\begin{problem}[Non-uniform contiguous translocation distance]
\label{problem1}
Given an alphabet $\Sigma$ and two finite sets of strings over it, $A,B\subseteq\Sigma^+$, compute the non-uniform contiguous translocation distance from $A$ to $B$, $TD_{NUC}(A,B)$.
\end{problem}

\begin{problem}[Non-uniform non-contiguous translocation distance]
\label{problem2}
Given an alphabet $\Sigma$ and two finite sets of strings over it, $A,B\subseteq\Sigma^+$, compute the non-uniform non-contiguous translocation distance from $A$ to $B$, $TD_{NUNC}(A,B)$.
\end{problem}

\begin{observation}
Let $\Sigma_A$ be the alphabet of the strings in $A$ and $\Sigma_B$ the alphabet of the strings in $B$. If $\Sigma_B\not\subseteq\Sigma_A$, then the strings in $B$ cannot be obtained from those in $A$. Hence, in the remainder of the paper, we assume that $\Sigma_B\subseteq\Sigma_A$.
\end{observation}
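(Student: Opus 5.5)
The observation says that if $\Sigma_B\not\subseteq\Sigma_A$, no string of $B$ that uses a symbol outside $\Sigma_A$ can be produced, so the distance is $\infty$. The natural route is to show that translocations never create new symbols. Concretely, I will prove by induction on the number of operations that every string in $TO_*(A)$ lies in $\Sigma_A^*$. From this the claim follows in both models, since a $B$-producing sequence only ever makes strings that belong to $TO_*(A)$.

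\textbf{Base case.} $TO_0(A)=A\subseteq\Sigma_A^*$ holds by the definition of $\Sigma_A$.

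\textbf{Inductive step.} Assume $TO_k(A)\subseteq\Sigma_A^*$. Take any $x,y\in TO_k(A)$ with $(x,y)\vdash(u,v)$. By the definition of translocation we can write $x=x_1x_2$ and $y=y_1y_2$, and then $u=y_1x_2$ and $v=x_1y_2$. Each of $x_1,x_2,y_1,y_2$ is a substring of $x$ or of $y$, so each lies in $\Sigma_A^*$. Since $\Sigma_A^*$ is closed under concatenation, $u$ and $v$ also lie in $\Sigma_A^*$. Hence $TO_{k+1}(A)\subseteq\Sigma_A^*$, and taking the union over all $k$ gives $TO_*(A)\subseteq\Sigma_A^*$.

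\textbf{Link to executable sequences.} It remains to check that every string appearing in a CTS or NCTS from $A$ belongs to $TO_*(A)$. I will argue by induction on the index $i$. The availability conditions say $F_{i-1}(S,x_i)>0$ and $F_{i-1}(S,y_i)>0$; in the contiguous model this follows from $F_{i-1}>P_{i-1}\ge 0$. So each input $x_i,y_i$ is either in $A$ or was produced at an earlier step, and by induction it lies in $TO_*(A)$. Therefore $u_i,v_i\in TO(TO_*(A))\subseteq TO_*(A)$.

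\textbf{Conclusion.} A $B$-producing sequence requires $F_n(S,z)>0$ for every $z\in B$. This means each $z\in B$ is in $A$ or is produced by the sequence, so $z\in TO_*(A)\subseteq\Sigma_A^*$. If some $z\in B$ contains a symbol $a\notin\Sigma_A$, then $z\notin\Sigma_A^*$, so no $B$-producing sequence exists and $TD_{NUC}(A,B)=TD_{NUNC}(A,B)=\infty$. Such instances are trivial, and the paper may therefore assume $\Sigma_B\subseteq\Sigma_A$.

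The argument has no real obstacle. The only point needing care is the link between executable sequences and $TO_*(A)$, that is, making sure the availability conditions force every consumed string to have been produced earlier or to lie in $A$.
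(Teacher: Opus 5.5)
Your proof is correct and formalizes the reasoning the paper leaves implicit; the paper states the observation without proof. The idea is that a translocation only splits and recombines existing strings, so every producible string stays in $\Sigma_A^*$, while a $B$-producing sequence needs each $z\in B$ to be in $A$ or produced. The one step you left implicit, $TO(TO_*(A))\subseteq TO_*(A)$, holds because the trivial translocation $\vdash_{0,0}$ gives $TO_k(A)\cap\Sigma^+\subseteq TO_{k+1}(A)$ and translocation inputs are nonempty; alternatively, you could skip $TO_*(A)$ entirely and prove by induction on $i$ that every produced $u_i,v_i$ lies in $\Sigma_A^*$.
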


\section{NP-hardness}
\label{sec:np-hard}

We show that the non-uniform contiguous translocation distance problem is NP-hard. We give a reduction from the $(3,B2)$-SAT problem, which is NP-complete~\cite{berman}. In a $(3,B2)$-SAT instance, every variable occurs exactly four times, twice positively and twice negatively.

Let $\phi$ be an instance of $(3,B2)$-SAT with $N$ variables and $M$ clauses. We construct an instance of the non-uniform contiguous translocation distance problem with target set $B=\{z\}$ such that $\phi$ is satisfiable if and only if $TD_{NUC}(A,B)\leq 6N+2M$.

The target string is $z=\$_0gadget(x_1)gadget(x_2)\cdots gadget(x_N)\$c_1\$c_2\$\cdots\$c_M$, where $gadget(x_i)={\mathrm x}_i^1{\mathrm x}_i^2{\mathrm x}_i^3{\mathrm x}_i^4{\mathrm x}_i^5{\mathrm x}_i^6{\mathrm x}_i^7{\mathrm x}_i^8$ for every variable $x_i$, and $c_j={\mathrm a}_j{\mathrm a}_j$ for every clause $C_j$. The symbols $\$_0$ and $\$$ are fresh symbols. In particular, $\$$ is used as a separator between the clause strings.

For every variable $x_i$, let $C_{i_1}$ and $C_{i_2}$ be the two clauses in which $x_i$ occurs positively, and let $C_{i_3}$ and $C_{i_4}$ be the two clauses in which $x_i$ occurs negatively. We construct the following two sets of strings: $
X_i=\{{\mathrm x}_i^1{\mathrm x}_i^4{\mathrm a}_{i_2},\
{\mathrm a}_{i_1}{\mathrm x}_i^3{\mathrm x}_i^2{\mathrm a}_{i_1},\
{\mathrm a}_{i_2}{\mathrm x}_i^5{\mathrm x}_i^8,\
{\mathrm x}_i^7{\mathrm x}_i^6\},
$ and $
\overline{X_i}=\{{\mathrm x}_i^1{\mathrm x}_i^8,\
{\mathrm x}_i^3{\mathrm x}_i^6{\mathrm a}_{i_4},\
{\mathrm a}_{i_3}{\mathrm x}_i^5{\mathrm x}_i^4{\mathrm a}_{i_3},\
{\mathrm a}_{i_4}{\mathrm x}_i^7{\mathrm x}_i^2\}.
$. The initial set is
$
A=\bigcup_{i=1}^{N}X_i\cup\bigcup_{i=1}^{N}\overline{X_i}\cup\{\$,\$_0\}.
$.

Using the strings in $X_i$, the gadget $gadget(x_i)$ and the two clause strings corresponding to the clauses satisfied by setting $x_i=True$ can be obtained in five translocations. More precisely, the following translocations produce $gadget(x_i)$, ${\mathrm a}_{i_1}{\mathrm a}_{i_1}$, and ${\mathrm a}_{i_2}{\mathrm a}_{i_2}$:

\begin{enumerate}
    \item $({\mathrm x}_i^1{\mathrm x}_i^4{\mathrm a}_{i_2},
    {\mathrm a}_{i_1}{\mathrm x}_i^3{\mathrm x}_i^2{\mathrm a}_{i_1})
    \vdash_{|{\mathrm x}_i^1|,|{\mathrm a}_{i_1}{\mathrm x}_i^3|}
    ({\mathrm a}_{i_1}{\mathrm x}_i^3{\mathrm x}_i^4{\mathrm a}_{i_2},
    {\mathrm x}_i^1{\mathrm x}_i^2{\mathrm a}_{i_1})$;

    \item $({\mathrm a}_{i_2}{\mathrm x}_i^5{\mathrm x}_i^8,
    {\mathrm x}_i^7{\mathrm x}_i^6)
    \vdash_{|{\mathrm a}_{i_2}{\mathrm x}_i^5|,|{\mathrm x}_i^7|}
    ({\mathrm x}_i^7{\mathrm x}_i^8,
    {\mathrm a}_{i_2}{\mathrm x}_i^5{\mathrm x}_i^6)$;

    \item $({\mathrm x}_i^1{\mathrm x}_i^2{\mathrm a}_{i_1},
    {\mathrm a}_{i_1}{\mathrm x}_i^3{\mathrm x}_i^4{\mathrm a}_{i_2})
    \vdash_{|{\mathrm x}_i^1{\mathrm x}_i^2|,|{\mathrm a}_{i_1}|}
    ({\mathrm a}_{i_1}{\mathrm a}_{i_1},
    {\mathrm x}_i^1{\mathrm x}_i^2{\mathrm x}_i^3{\mathrm x}_i^4{\mathrm a}_{i_2})$;

    \item $({\mathrm x}_i^1{\mathrm x}_i^2{\mathrm x}_i^3{\mathrm x}_i^4{\mathrm a}_{i_2},
    {\mathrm a}_{i_2}{\mathrm x}_i^5{\mathrm x}_i^6)
    \vdash_{|{\mathrm x}_i^1{\mathrm x}_i^2{\mathrm x}_i^3{\mathrm x}_i^4|,|{\mathrm a}_{i_2}|}
    ({\mathrm a}_{i_2}{\mathrm a}_{i_2},
    {\mathrm x}_i^1{\mathrm x}_i^2{\mathrm x}_i^3{\mathrm x}_i^4{\mathrm x}_i^5{\mathrm x}_i^6)$;

    \item $({\mathrm x}_i^1{\mathrm x}_i^2{\mathrm x}_i^3{\mathrm x}_i^4{\mathrm x}_i^5{\mathrm x}_i^6,
    {\mathrm x}_i^7{\mathrm x}_i^8)
    \vdash_{|{\mathrm x}_i^1{\mathrm x}_i^2{\mathrm x}_i^3{\mathrm x}_i^4{\mathrm x}_i^5{\mathrm x}_i^6|,0}
    (\epsilon,gadget(x_i))$.
\end{enumerate}

Similarly, using the strings in $\overline{X_i}$, the gadget $gadget(x_i)$ and the two clause strings corresponding to the clauses satisfied by setting $x_i=False$ are obtained in five translocations:

\begin{enumerate}
    \item $({\mathrm x}_i^1{\mathrm x}_i^8,
    {\mathrm a}_{i_4}{\mathrm x}_i^7{\mathrm x}_i^2)
    \vdash_{|{\mathrm x}_i^1|,|{\mathrm a}_{i_4}{\mathrm x}_i^7|}
    ({\mathrm a}_{i_4}{\mathrm x}_i^7{\mathrm x}_i^8,
    {\mathrm x}_i^1{\mathrm x}_i^2)$;

    \item $({\mathrm x}_i^3{\mathrm x}_i^6{\mathrm a}_{i_4},
    {\mathrm a}_{i_3}{\mathrm x}_i^5{\mathrm x}_i^4{\mathrm a}_{i_3})
    \vdash_{|{\mathrm x}_i^3|,|{\mathrm a}_{i_3}{\mathrm x}_i^5|}
    ({\mathrm a}_{i_3}{\mathrm x}_i^5{\mathrm x}_i^6{\mathrm a}_{i_4},
    {\mathrm x}_i^3{\mathrm x}_i^4{\mathrm a}_{i_3})$;

    \item $({\mathrm x}_i^1{\mathrm x}_i^2,
    {\mathrm x}_i^3{\mathrm x}_i^4{\mathrm a}_{i_3})
    \vdash_{|{\mathrm x}_i^1{\mathrm x}_i^2|,0}
    (\epsilon,{\mathrm x}_i^1{\mathrm x}_i^2{\mathrm x}_i^3{\mathrm x}_i^4{\mathrm a}_{i_3})$;

    \item $({\mathrm x}_i^1{\mathrm x}_i^2{\mathrm x}_i^3{\mathrm x}_i^4{\mathrm a}_{i_3},
    {\mathrm a}_{i_3}{\mathrm x}_i^5{\mathrm x}_i^6{\mathrm a}_{i_4})
    \vdash_{|{\mathrm x}_i^1{\mathrm x}_i^2{\mathrm x}_i^3{\mathrm x}_i^4|,|{\mathrm a}_{i_3}|}
    ({\mathrm a}_{i_3}{\mathrm a}_{i_3},
    {\mathrm x}_i^1{\mathrm x}_i^2{\mathrm x}_i^3{\mathrm x}_i^4{\mathrm x}_i^5{\mathrm x}_i^6{\mathrm a}_{i_4})$;

    \item $({\mathrm x}_i^1{\mathrm x}_i^2{\mathrm x}_i^3{\mathrm x}_i^4{\mathrm x}_i^5{\mathrm x}_i^6{\mathrm a}_{i_4},
    {\mathrm a}_{i_4}{\mathrm x}_i^7{\mathrm x}_i^8)
    \vdash_{|{\mathrm x}_i^1{\mathrm x}_i^2{\mathrm x}_i^3{\mathrm x}_i^4{\mathrm x}_i^5{\mathrm x}_i^6|,|{\mathrm a}_{i_4}|}
    ({\mathrm a}_{i_4}{\mathrm a}_{i_4},gadget(x_i))$.
\end{enumerate}

We first establish a global lower bound for any sequence producing the target string. Consider any contiguous translocation sequence producing $z$. Trace every occurrence of a symbol in the final copy of $z$ back to the particular copy of an initial string from which that occurrence originates, and let $c$ denote the number of such contributing initial copies.

For each variable $x_i$, the target contains the eight distinct symbols ${\mathrm x}_i^1,\dots,{\mathrm x}_i^8$. Every initial string contains at most two symbols of $gadget(x_i)$, and no initial string contains gadget symbols belonging to two different variables. Hence at least four contributing initial copies are required for every variable. Moreover, the target contains $M$ occurrences of the separator $\$$, each of which must originate from a copy of the one-character initial string $\$$, and its first symbol $\$_0$ must originate from a copy of $\$_0$. Therefore, $c\geq4N+M+1$.

At any point in the sequence, consider only the symbol occurrences that eventually belong to the final copy of $z$. Let $r$ be the number of currently available strings containing at least one such occurrence, and let $g$ be the number of adjacent pairs of tracked occurrences that are already consecutive in the current strings and are consecutive in the same order in $z$. Define the potential $\Phi=g+(c-r)$.

Initially, $r=c$. Furthermore, by construction, no pair of tracked occurrences that are consecutive in $z$ is already consecutive in an initial contributing string. Hence initially $\Phi=0$. In the final state, all tracked occurrences belong to the single string $z$, so $r=1$, while all $|z|-1=8N+3M$ adjacencies of $z$ are present. Therefore the final value of the potential is $\Phi=(8N+3M)+(c-1)\geq12N+4M$.

A single translocation increases $\Phi$ by at most $2$. Indeed, a translocation creates at most two new adjacencies. If $r$ does not decrease, then $c-r$ does not increase, and therefore the potential increases by at most $2$. If $r$ decreases by one, then only one of the two output strings contains tracked occurrences from both input strings, so at most one new adjacency of $z$ can be created, while $c-r$ increases by one. Again, the total increase is at most $2$. Consequently, every contiguous translocation sequence producing $z$ contains at least $6N+2M$ operations.

\begin{lemma}\label{lem:mixed-gadget}
Fix a variable $x_i$ and consider the eight occurrences
${\mathrm x}_i^1,\dots,{\mathrm x}_i^8$ that form $gadget(x_i)$ in the
final target string. Suppose that these eight occurrences originate from
exactly four copies of strings in $X_i\cup\overline{X_i}$, with each copy
contributing exactly two gadget symbols. If these four copies are not all
from $X_i$ and are not all from $\overline{X_i}$, then at least six
translocations are required before the eight tracked occurrences can occur
in the order
${\mathrm x}_i^1{\mathrm x}_i^2\cdots{\mathrm x}_i^8$
in a single string.
\end{lemma}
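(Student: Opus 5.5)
The plan is to restrict the potential argument given just before the lemma to the eight tracked gadget occurrences, and then show that in the mixed case the bound it yields cannot be tight.

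\textbf{Step 1: only two mixed configurations exist.} Each copy contributes exactly two gadget symbols, so the four copies induce a perfect matching of $\{1,\dots,8\}$. Strings of $X_i$ give the pairs $\{1,4\},\{3,2\},\{5,8\},\{7,6\}$, and strings of $\overline{X_i}$ give $\{1,8\},\{3,6\},\{5,4\},\{7,2\}$. The union of these pairs is two disjoint $4$-cycles: $1\!-\!4\!-\!5\!-\!8\!-\!1$ and $3\!-\!2\!-\!7\!-\!6\!-\!3$, with edges alternating between $X_i$ and $\overline{X_i}$. A perfect matching therefore takes, independently on each cycle, either its two $X_i$-edges or its two $\overline{X_i}$-edges. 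The mixed case thus consists of exactly two configurations:
\begin{itemize}
\item[(a)] $\{x^1x^4a_{i_2},\ a_{i_2}x^5x^8,\ x^3x^6a_{i_4},\ a_{i_4}x^7x^2\}$;
\item[(b)] $\{x^1x^8,\ a_{i_3}x^5x^4a_{i_3},\ a_{i_1}x^3x^2a_{i_1},\ x^7x^6\}$.
\end{itemize}
Here $x^k$ abbreviates ${\mathrm x}_i^k$.

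\textbf{Step 2: the restricted potential gives five operations.} Take $\Phi=g+(4-r)$, where $g$ and $r$ are computed only over the eight gadget occurrences. Initially $\Phi=0$, since no initial string contains a consecutive pair $x^kx^{k+1}$. At the end $\Phi=7+3=10$. The argument preceding the lemma shows that one translocation raises $\Phi$ by at most $2$, so at least five operations are needed. Suppose, for contradiction, that exactly five suffice. Then every operation is \emph{tight}, meaning it is of one of two kinds:
\begin{itemize}
\item[(i)] it keeps $r$ fixed and creates two new good adjacencies, each of which survives to the end;
\item[(ii)] it merges two tracked strings into one and creates exactly one good adjacency.
\end{itemize}
In particular, no good adjacency is ever destroyed and no tracked string is ever split into two tracked pieces. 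By Step~2, $r$ must drop from $4$ to $1$, so exactly three operations are of type (ii) and two are of type (i).

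\textbf{Step 3: derive a contradiction in each configuration.} Every initial copy carries its two gadget symbols in the wrong order. For example, in $x^1x^4$ the symbol $x^4$ must eventually follow $x^3$, and $x^1$ must be followed by $x^2$. Under the tightness constraints, a type (ii) merge creates its single new adjacency at the junction $u=y_1x_2$ or $v=x_1y_2$, and the other output string carries no tracked symbols. Each bad internal adjacency $x^1|x^4$, $x^3|x^6$, etc.\ must be broken by a cut, and that same cut must simultaneously create a good adjacency on both sides. I will enumerate, for each configuration, which good adjacencies can be formed first using a pair of initial copies. In configuration (a), the pairs whose exchange yields two good adjacencies at once are very limited, essentially because the $X_i$ and $\overline{X_i}$ halves sit in opposite orientation relative to the target order. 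I expect to show that some required adjacency, most likely $x^4x^5$ or $x^6x^7$ (the adjacencies joining the two cycles), can only be created by an operation that either separates an already-built good block or creates a single adjacency without merging. Either outcome makes that step non-tight and forces a sixth translocation. Configuration (b) is handled symmetrically.

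\textbf{Expected obstacle.} The main difficulty is Step~3: turning the rigid tightness conditions into an exhaustive but short case analysis. I would organize it by tracking the \emph{first} moment at which a block containing symbols from both $4$-cycles appears, and show that in a mixed configuration this moment cannot be reached by a tight operation.
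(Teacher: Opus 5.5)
\textbf{Assessment: genuine gap in Step 3.}

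Your Steps 1 and 2 are correct and match the paper's setup. You find the same two mixed configurations: your (a) $=\{14,58,36,72\}$ is the paper's $\mathcal M_2$, and your (b) $=\{18,54,32,76\}$ is its $\mathcal M_1$. You use the same restricted potential, and you reduce to the same case in which all five steps raise $\Phi$ by exactly $2$. Your classification of these tight steps into (i) and (ii) is also right.

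The gap is Step 3. It carries the whole content of the lemma, and you only announce it. Moreover, the concrete prediction you make there is false:
\begin{itemize}
\item The adjacencies $45$ and $67$ do not join the two $4$-cycles, since $\{4,5\}\subseteq\{1,4,5,8\}$ and $\{6,7\}\subseteq\{2,3,6,7\}$. The joining adjacencies are $12,34,56,78$.
\item In (a), both $45$ and $67$ are created by tight merges: $14\cdot58\to1458$ and $36\cdot72\to3672$. So no obstruction arises at them.
\item Configuration (b) is not symmetric to (a). The natural symmetry fixing the target (reverse every string and relabel $j\mapsto 9-j$) maps each configuration to itself, and the two behave differently.
\end{itemize}
The paper settles the lemma by exhaustive search. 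Configuration (b) has no tight first move at all. Configuration (a) has exactly three, leading to $\{1458,36,72\}$, $\{14,58,3672\}$ and $\{14,58,7236\}$. After two tight moves, only $\{1458,3672\}$ or $\{1458,7236\}$ is reachable, and from either of these no tight move exists.

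\textbf{How to close the gap.} The missing idea is already latent in your remark that every wrong-order adjacency must be cut.
\begin{itemize}
\item In a tight sequence every newly created adjacency is good: the junction of a type (ii) merge, and both junctions of a type (i) exchange. Hence the wrong-order adjacencies present at any time form a subset of the four initial ones.
\item A type (ii) merge never cuts inside a string. A type (i) step cuts strictly inside both inputs and may not destroy a good adjacency. So it splits two initial wrong-order pairs $a|b$ and $c|d$ and creates the junctions $cb$ and $ad$. Both junctions are good exactly when $(c,d)=(b-1,a+1)$.
\item For the pure sets this pairing works: $14\leftrightarrow32$ and $58\leftrightarrow76$ in $X_i$, and $18\leftrightarrow72$ and $36\leftrightarrow54$ in $\overline{X_i}$.
\item In (a), however, the partners of $14,58,36,72$ are $32,76,54,18$. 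None of these occurs in (a); they form exactly (b). Symmetrically, the partners of the pairs in (b) form (a).
\end{itemize}
So in either mixed configuration no type (i) step is ever available. The four initial wrong-order adjacencies are then never removed, and $12345678$ cannot be reached by tight steps. Without this argument, or the paper's enumeration, the proposal does not yet prove the lemma.
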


\begin{proof}
Trace only the eight occurrences that eventually form $gadget(x_i)$ in the
target. For every current string $w$, let $\pi_i(w)$ be the word obtained
by retaining only these tracked occurrences, replacing
${\mathrm x}_i^j$ by $j$, and deleting every other symbol. In particular,
a string that contains no tracked occurrence has projection $\epsilon$,
even if it contains other copies of symbols from
$\{{\mathrm x}_i^1,\dots,{\mathrm x}_i^8\}$.

A translocation on two current strings induces a translocation on their
projections, possibly with one projection equal to $\epsilon$. Indeed, a
cut in an original string determines a cut between two consecutive
tracked occurrences in its projection, and deleting the untracked symbols
preserves the relative order of all tracked occurrences. Therefore, any
lower bound obtained in the projected process also applies to the original
translocation sequence.

The possible projections of the strings in
$X_i\cup\overline{X_i}$ that contain two gadget symbols are
$14,32,58,76,18,36,54,72$. Since the four contributing copies together
contain each of the symbols $1,\dots,8$ exactly once, their projections
must form a partition of $\{1,\dots,8\}$. There are exactly four such
partitions:
$\{14,32,58,76\}$,
$\{18,36,54,72\}$,
$\{14,58,36,72\}$, and
$\{18,54,32,76\}$.
The first is obtained by choosing the four strings from $X_i$, and the
second by choosing the four strings from $\overline{X_i}$. Hence the two
mixed configurations are
$\mathcal M_1=\{18,54,32,76\}$ and
$\mathcal M_2=\{14,58,36,72\}$.

For a projected state, let $r_i$ be the number of nonempty projected
strings and let $g_i$ be the number of correct adjacencies among
$12,23,34,45,56,67,78$ that occur in the projected strings. Define
$\Phi_i=g_i+(4-r_i)$.
For either mixed initial configuration, $r_i=4$ and $g_i=0$, so
$\Phi_i=0$. When the tracked occurrences form the single string
$12345678$, we have $r_i=1$ and $g_i=7$, and hence $\Phi_i=10$.

Every translocation increases $\Phi_i$ by at most $2$. If the number of
nonempty projected strings does not decrease, the translocation creates at
most two new adjacencies, so the increase is at most $2$. If $r_i$
decreases by one, then the two nonempty input projections are merged into
one nonempty output projection; in this case at most one new correct
adjacency is created, while $4-r_i$ increases by one. Thus the total
increase is again at most $2$. If $r_i$ increases, the term $4-r_i$
decreases, so the increase is also at most $2$.

Consequently, if five translocations were sufficient, then every one of
the five translocations would have to increase $\Phi_i$ by exactly $2$.

Consider first $\mathcal M_1=\{18,54,32,76\}$. We enumerate all choices
of two projected input strings, also allowing one of them to be $\epsilon$,
and all cut positions in the two inputs. No first translocation increases
$\Phi_i$ by $2$; the maximum possible increase is $1$. Hence
$\mathcal M_1$ cannot produce $12345678$ in five translocations.

Now consider $\mathcal M_2=\{14,58,36,72\}$. An exhaustive check of the
possible input pairs and cut positions shows that the only projected
states reachable by a first translocation that increases $\Phi_i$ by
exactly $2$ are
$\{1458,36,72\}$,
$\{14,58,3672\}$, and
$\{14,58,7236\}$.
From these states, the only second translocations that again increase
$\Phi_i$ by exactly $2$ lead to one of the two states
$\{1458,3672\}$ or
$\{1458,7236\}$.
For each of these two states, checking all possible input pairs and cut
positions shows that no translocation increases $\Phi_i$ by $2$.
Therefore a sequence of five translocations cannot increase $\Phi_i$ from
$0$ to $10$.

Thus neither mixed configuration can form the eight tracked gadget
occurrences in five translocations. Hence any mixed selection requires at
least six translocations.
\end{proof}

\begin{theorem}
\label{thm:np_complete}
The non-uniform contiguous translocation distance problem is NP-hard over arbitrary finite alphabets, even for a target set of size one.
\end{theorem}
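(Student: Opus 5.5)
The plan is to show that $\phi$ is satisfiable if and only if $TD_{NUC}(A,\{z\})\le 6N+2M$. The reduction is clearly polynomial: $|z|=8N+3M+1$, $|A|=8N+2$, and the alphabet has size $8N+M+2$.

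\emph{Forward direction.} Fix a satisfying assignment. For each variable $x_i$ I run the five translocations from $X_i$ if $x_i=True$, or from $\overline{X_i}$ if $x_i=False$. This produces $gadget(x_i)$ and the two strings ${\mathrm a}_j{\mathrm a}_j$ for the clauses this literal satisfies. Every clause contains at least one true literal, so every $c_j$ is available, and I keep one copy of each. Assembling $z$ then costs $N+M$ further translocations. A translocation $(u,v)\vdash_{|u|,0}(\epsilon,uv)$ concatenates two strings, so I can start from $\$_0$ (which is in $A$), append the $N$ gadgets, and then append $\$$ and $c_j$ alternately. That is $N+2M$ concatenations with $\$_0$ as the base, which is one concatenation too many.

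To fix this I will merge $\$$ and $c_j$ more cheaply. One option is to build $c_j$ directly with $\$$ attached; another is to absorb the separators differently. I expect the right accounting to show that the $\$c_j$ joins and the gadget joins together cost exactly $N+M$ after a suitable arrangement, for example by merging two partial chains at once. If the count comes out as $5N+N+2M=6N+2M$ only after this adjustment, I will redo it carefully.

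Unused copies of clause strings do no harm. Every intermediate string produced here is fresh and used once, so the sequence is contiguous.

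\emph{Backward direction.} Suppose a contiguous sequence of length at most $6N+2M$ produces $z$. The global potential argument preceding Lemma~\ref{lem:mixed-gadget} shows every such sequence has length at least $6N+2M$. So the bound is tight, which forces three things:
\begin{enumerate}
\item $c=4N+M+1$ exactly, so each gadget uses exactly four copies with two gadget symbols each;
\item every translocation raises $\Phi$ by exactly $2$;
\item no operation is wasted on anything unrelated to $z$.
\end{enumerate}

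I then localize the budget. I charge to variable $x_i$ the translocations that create adjacencies inside $gadget(x_i)$. Applying the potential $\Phi_i$ of Lemma~\ref{lem:mixed-gadget} per variable, together with the global count, should show that each gadget receives at most $5$ translocations. The lemma then forces its four contributing copies to lie entirely in $X_i$ or entirely in $\overline{X_i}$. I set $x_i=True$ in the first case and $x_i=False$ in the second.

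Next I need that each clause string ${\mathrm a}_j{\mathrm a}_j$ in $z$ gets its two ${\mathrm a}_j$ occurrences from strings of a variable whose chosen side contains the literal satisfying $C_j$. The symbol ${\mathrm a}_j$ occurs only in initial strings belonging to the side where the corresponding literal makes $C_j$ true. So I must argue that tight sequences cannot bring in ${\mathrm a}_j$ from copies of an unchosen side. Such copies would be extra contributing copies beyond $4N+M+1$, since the clause symbols of $z$ also have to be counted in $c$.

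\emph{Main obstacle.} I expect the backward direction to be the hard part, specifically two points:
\begin{enumerate}
\item Refining the global count so that clause occurrences are included in $c$, showing $c$ must equal the minimum, and hence that every ${\mathrm a}_j$ occurrence comes from a gadget copy already counted.
\item Distributing the $6N+2M$ budget so that each variable gets at most five gadget-building steps.
\end{enumerate}
For the second point, I would first try an exact per-variable decomposition of $\Phi$, namely $\Phi=\sum_i\Phi_i+\Phi_{\mathrm{rest}}$, and argue that every step raises exactly one component by $2$.
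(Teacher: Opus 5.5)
Your reduction and overall strategy are the paper's. Two of your worries are unfounded, and the one real difficulty is left open.

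\textbf{Forward direction.} There is no deficit. After the $5N$ gadget translocations the remaining budget is $6N+2M-5N=N+2M$, not $N+M$. Your chain ($\$_0$, then $N$ gadget appends, then $\$$ and $c_j$ alternately) uses exactly $N+2M$ translocations, which is the paper's count $5N+N+2M$. No cheaper assembly is needed, and none can exist, since the potential bound already shows that $6N+2M$ is a lower bound.

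\textbf{Obstacle (i).} This is immediate, exactly as in the paper. $c$ counts every initial copy contributing any occurrence in $z$. Tightness gives $c=4N+M+1$, which leaves no contributing copies besides the $4N$ gadget copies, the $M$ copies of $\$$ and the copy of $\$_0$. Hence both occurrences of ${\mathrm a}_j$ come from selected copies. The strings of $X_i$ contain only ${\mathrm a}_{i_1},{\mathrm a}_{i_2}$, and those of $\overline{X_i}$ only ${\mathrm a}_{i_3},{\mathrm a}_{i_4}$, so the assignment satisfies every clause.

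\textbf{Obstacle (ii) is the genuine gap.} The paper simply invokes Lemma~\ref{lem:mixed-gadget} once tightness is known, so you are right that an argument is needed. However, you do not give one, and your proposed route would not suffice even if proved.

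An exact decomposition in which every step raises exactly one of $\Phi_1,\dots,\Phi_N,\Phi_{\mathrm{rest}}$ by $2$ only tells you that $\Phi_i$ increases in exactly five steps. It does not rule out further steps that leave $\Phi_i$ unchanged while rearranging the projected gadget-$i$ strings. The lemma only excludes projected sequences of at most five moves in total.

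Such neutral moves really help in the projected model. Cutting $18$ as $1|8$ and $54$ as $5|4$ turns $\mathcal M_1=\{18,54,32,76\}$ into the pure configuration $\{14,32,58,76\}$, with $\Phi_i$ still $0$. The five $+2$ moves of the $X_i$ construction then finish.

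A statement that would close the gap is the following: in a sequence of length $6N+2M$, every translocation that changes the projected configuration of gadget $i$ raises $\Phi_i$ by exactly $2$. Then gadget $i$ is acted on exactly five times, and the lemma's case analysis applies.

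This has to be extracted from the physical structure that tightness forces:
\begin{itemize}
\item every translocation has both inputs containing tracked occurrences;
\item no translocation cuts a correct adjacency of $z$;
\item every junction in an output containing tracked occurrences is itself a correct adjacency of $z$.
\end{itemize}
The neutral move above is excluded physically for exactly this reason: its junctions ${\mathrm x}_i^1{\mathrm x}_i^4$ and ${\mathrm x}_i^5{\mathrm x}_i^8$ are not correct. Deriving the per-gadget statement from these facts is the missing step.
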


\begin{proof}
Let $\phi$ be an instance of $(3,B2)$-SAT with $N$ variables and $M$ clauses, and construct $A$ and $B=\{z\}$ as above.

Assume first that $\phi$ is satisfiable, and let $\alpha$ be a satisfying assignment. For each variable $x_i$, if $\alpha(x_i)=True$, we apply the five translocations described above to the strings in $X_i$. If $\alpha(x_i)=False$, we instead apply the five translocations described for $\overline{X_i}$. After $5N$ translocations, all $N$ gadget strings have been obtained. Moreover, since $\alpha$ satisfies every clause, for every clause $C_j$ at least one of the selected variable constructions produces the string $c_j={\mathrm a}_j{\mathrm a}_j$.

Starting with $\$_0$, we use $N$ additional translocations to append the gadget strings in the order $gadget(x_1),\dots,gadget(x_N)$. Finally, for each clause, we use one copy of $\$$ and the already produced string $c_j$ to append the substring $\$c_j$ using two translocations. Thus $z$ is obtained in $5N+N+2M=6N+2M$ translocations, and therefore $TD_{NUC}(A,B)\leq6N+2M$.

Conversely, assume that $TD_{NUC}(A,B)\leq6N+2M$, and let $S$ be a $B$-producing sequence of at most $6N+2M$ translocations. By the potential argument above, every sequence producing $z$ contains at least $6N+2M$ translocations. Hence $|S|=6N+2M$.

All inequalities in the lower-bound argument must therefore be tight. In particular, the number of contributing initial copies is exactly $c=4N+M+1$. Hence, for every variable $x_i$, exactly four initial copies contribute the eight occurrences ${\mathrm x}_i^1,\dots,{\mathrm x}_i^8$, and each of the four copies contributes exactly two gadget symbols. There are no additional contributing copies available for the gadget symbols or for the clause symbols.

By Lemma~\ref{lem:mixed-gadget}, the four contributing copies for each variable cannot mix strings from $X_i$ and $\overline{X_i}$ in a translocation sequence attaining the lower bound. Therefore, for each variable $x_i$, the four contributing copies are either all from $X_i$ or all from $\overline{X_i}$.

We define an assignment $\alpha$ by setting $x_i=True$ when the four contributing copies belong to $X_i$, and $x_i=False$ when they belong to $\overline{X_i}$.

It remains to show that $\alpha$ satisfies every clause. Since $c=4N+M+1$, the only contributing initial copies other than the four copies selected for every variable are the $M$ copies of $\$$ and the single copy of $\$_0$. Therefore, both occurrences of every symbol ${\mathrm a}_j$ in the target substring $c_j={\mathrm a}_j{\mathrm a}_j$ must originate from the variable copies already selected above.

If the selected copies for $x_i$ come from $X_i$, then the clause symbols contributed by these copies in pairs correspond to $C_{i_1}$ and $C_{i_2}$, precisely the clauses satisfied by setting $x_i=True$. If the selected copies come from $\overline{X_i}$, the corresponding pairs are those associated with $C_{i_3}$ and $C_{i_4}$, precisely the clauses satisfied by setting $x_i=False$.

Since the target contains $c_j={\mathrm a}_j{\mathrm a}_j$ for every clause $C_j$, at least one selected variable construction must supply the two occurrences of ${\mathrm a}_j$. By the definition of $\alpha$, that variable satisfies $C_j$. Hence every clause is satisfied and $\alpha$ satisfies $\phi$.

Therefore $\phi$ is satisfiable if and only if $TD_{NUC}(A,B)\leq6N+2M$. The reduction is polynomial, and the theorem follows.
\end{proof}


\begin{theorem}
\label{thm:np_complete_non_contiguous}
The non-uniform non-contiguous translocation distance problem is NP-hard
over arbitrary finite alphabets, even for a target set of size one.
\end{theorem}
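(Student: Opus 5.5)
We reuse the reduction from $(3,B2)$-SAT built for Theorem~\ref{thm:np_complete}: the same initial set $A$ and target $B=\{z\}$. We show that $\phi$ is satisfiable if and only if $TD_{NUNC}(A,B)\leq 6N+2M$. The difficulty is that in the non-contiguous model a produced string can be reused arbitrarily often. So we must check that both the upper and the lower bound survive this extra freedom.

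\textbf{Upper bound.} The sequence of $6N+2M$ translocations given in the proof of Theorem~\ref{thm:np_complete} uses every produced string at most once. Hence it is in particular a non-contiguous $B$-producing sequence, and satisfiability gives $TD_{NUNC}(A,B)\leq 6N+2M$.

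\textbf{Lower bound.} We adapt the potential argument by \emph{unfolding} a non-contiguous sequence into a derivation tree for the final copy of $z$.
\begin{itemize}
\item Fix a sequence $S$ and the translocation that yields $z$. Recursively, every input string that is not in $A$ is attributed to one earlier translocation that produced it.
\item Only the string occurrences that eventually contribute symbols to $z$ are traced.
\item A translocation of $S$ may be used several times in this unfolding, since its output can be reused.
\end{itemize}
The key claim is that along the tracked lineage of $z$, each translocation of $S$ needs to be counted at most once. The tracked symbol occurrences of $z$ descend from distinct positions of $z$, so we apply the potential $\Phi=g+(c-r)$ to the multiset of tracked occurrences rather than to $S$ directly. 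Two situations arise.
\begin{itemize}
\item If a produced string $w$ is reused in two different branches of the tree, then both branches carry the same adjacencies of $w$. These adjacencies were created only once, by the single translocation producing $w$. Reusing $w$, however, duplicates symbols of $w$, so the copies of $z$'s symbols in those branches contain repeated gadget symbols.
\item Each gadget symbol ${\mathrm x}_i^j$ and each clause symbol ${\mathrm a}_j$ occurs at most twice in $z$, while $\$$ occurs $M$ times but comes from a one-letter initial string.
\end{itemize}
We would argue that reusing an intermediate string containing two or more tracked symbols is impossible, since it would place the same symbol at two positions of $z$ where it does not occur. The exceptions are repeated symbols: ${\mathrm a}_j{\mathrm a}_j$, and the $\$$'s, which are single letters already in $A$. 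So the only reusable productive intermediate could be one containing a single tracked occurrence. Such a string carries no tracked adjacency, and we would show that it can be replaced by the corresponding initial-string copy without loss.

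After this normalization the tracked lineage is a genuine contiguous sequence of length at most $|S|$. Applying the lower bound $6N+2M$ from the contiguous analysis, and then its tightness together with Lemma~\ref{lem:mixed-gadget}, yields a consistent per-variable choice between $X_i$ and $\overline{X_i}$ that satisfies every clause, exactly as in Theorem~\ref{thm:np_complete}.

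\textbf{Main obstacle.} The hardest step is the normalization above. We must carefully rule out a reused intermediate contributing the two occurrences of ${\mathrm a}_j$ in $c_j$, or an intermediate like ${\mathrm a}_{j}{\mathrm a}_{j}$ reused across clauses. We would try a direct check first: every symbol ${\mathrm a}_j$ appears exactly as the pair $c_j$ in $z$, and distinct clauses use distinct symbols. A string reused twice in the lineage of $z$ would then force some ${\mathrm a}_j$ or gadget symbol to occur more often than in $z$. The one exception is a string consisting solely of $\$$ or $\$_0$, which is already initial.

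From this, the contiguous lower-bound argument applies verbatim to $S$, and the theorem follows.
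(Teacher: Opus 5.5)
Your upper bound is fine, and you are right that reuse is the real issue; the paper's own proof only asserts that its contiguous bounds "do not depend on whether produced strings can be reused". Your route is different: you normalize a non-contiguous sequence into a contiguous one. But that normalization, which carries the whole lower-bound direction, rests on a false claim.

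\textbf{Disjoint reuse.} Reusing an intermediate with two or more tracked symbols does \emph{not} force a symbol into too many positions of $z$. Different uses of one produced string can track disjoint parts of it. For instance, the string ${\mathrm x}_i^1{\mathrm x}_i^2{\mathrm x}_i^3{\mathrm x}_i^4{\mathrm a}_{i_2}$ from the $X_i$ construction can be cut once to deliver ${\mathrm x}_i^1{\mathrm x}_i^2$ and once more to deliver ${\mathrm x}_i^3{\mathrm x}_i^4{\mathrm a}_{i_2}$. Nothing is overproduced, and each use carries at least two tracked symbols, so your single-occurrence replacement does not apply either. A contiguous simulation needs two copies of this string, so your unfolding re-executes the translocation producing it together with its ancestry. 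Hence ``the tracked lineage is a genuine contiguous sequence of length at most $|S|$'' is unsupported.

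\textbf{Exceptions too narrow.} A multi-letter intermediate such as $\$c_1$ can be reused for its $\$$ alone, e.g.\ $(\$c_1,c_2)\vdash_{1,0}(c_1,\$c_2)$. So one initial copy of $\$$ can feed several separators. This kind of saving breaks the count $c\ge 4N+M+1$ for $S$ itself, and the tightness and clause-satisfaction steps you want to import rely on that count. In addition, replacing single-occurrence reuses by initial copies is only announced (``we would show''), not proved.

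To close the gap you need one of two things. The first is a normalization that provably eliminates such reuses without lengthening the sequence. The second is a lower bound argued directly on the non-contiguous derivation. For instance, let $r$ count the string instances (produced strings and used initial copies) that still have a pending use through which they contribute to $z$. One can check that a translocation then still raises $g+(c-r)$ by at most $2$, and by at most $1$ whenever one of its inputs contributes again later. You would still have to show that every initial copy of $\$$ or ${\mathrm a}_j$ saved through reuse is paid for by such deficient steps. You would also need to re-establish Lemma~\ref{lem:mixed-gadget} for a projected process in which strings may be reused.
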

\begin{proof}
We use the same reduction from $(3,B2)$-SAT as in the contiguous case. Every contiguous translocation sequence is also a non-contiguous one. For the converse direction, observe that the lower bounds used above depend only on the translocations required to create the necessary adjacencies in the target, and not on whether produced strings can be reused. Hence, each variable gadget still requires at least $5$ translocations, while mixing strings from $X_i$ and $\overline{X_i}$ requires at least $6$.

Therefore, any non-contiguous sequence of at most $6\cdot N+2\cdot M$ translocations must have the same structure as in the contiguous case. The same argument then yields a satisfying assignment for the $(3,B2)$-SAT instance. Thus, the non-uniform non-contiguous translocation distance problem is NP-hard.
\end{proof}

\section{FPT algorithm}\label{sec:fpt}

In this section we give an FPT algorithm for the non-uniform contiguous translocation distance with $|B|=1$, i.e.\ $B=\{z\}$, by kernelization. We construct a new input set $A'$ such that $|A'|\le 2\cdot e^{4\cdot(|z|+1)^2}$, $\max{|x|\mid x\in A'}\le 2\cdot |z|+1$, and $TD_{NUC}(A,\{z\})=TD_{NUC}(A',\{z\})$, where $e$ is the base of the natural logarithm. Let $\mathit{SB}=\{z[i\dots j]\mid 1\le i\le j\le |z|\}$ be the set of all substrings of $z$. The kernel keeps only the information relevant for producing $z$: which pairwise-disjoint substrings from $\mathit{SB}$ occur in each string and in what left-to-right order. All other symbols are replaced by a separator $\#$ (with consecutive separators merged). Each valid choice of pairwise-disjoint substrings occurring in an input string may yield a canonical string (Definition~\ref{def:canonical}), so a single input string may witness several canonical strings, and $A'$ contains exactly the canonical strings witnessed by some string in $A$ (Algorithm~\ref{alg:A'}). To prove correctness, fix an optimal translocation sequence $S$ producing $z$ from $A$ and consider the minimal sets of still-needed strings before each step. We transform $S$ into a sequence $S'$ over canonical strings, without increasing its length, by reverse induction: we first replace the last step by one producing $z$ from canonical inputs, and then replace steps backward so that each new step outputs canonical strings matching the canonical forms required later. The invariant is that every string that will be used in later steps has an available canonical representative, and this representative can replace the original because it preserves exactly the occurring $z$-substrings and their order. Hence $S'$ produces $z$ from $A'$ with $|S'|=|S|$, implying $TD_{NUC}(A,\{z\})=TD_{NUC}(A',\{z\})$. Since every canonical string has length at most $2|z|+1$, it has at most $2|z|+2$ cut positions, and the number of possible canonical strings is bounded by $2e^{4(|z|+1)^2}$. Thus, both the size of the kernel $A'$ and the number of cut positions considered at each search node are bounded by functions of $|z|$. The construction is described formally in Algorithm~\ref{alg:A'}, and the bounded-search algorithm is given in Algorithm~\ref{alg:backtrack}.

\begin{definition}
    Let $V \subseteq \mathit{SB}$. We say that $V$ is valid $ \iff \forall x, y \in V$, $x = z[i \dots j]$, $y = z[k \dots l]$ we have $j < k$ or $i > l$.
\end{definition}

Informally, $V$ is a valid set if any two of its substrings do not overlap. Let $\mathit{SV} = \{V \subseteq \mathit{SB} \vert V \text{ is valid} \}$. In Algorithm~\ref{alg:A'} we describe how to compute $A'$.

\begin{definition}\label{def:canonical}
Let $z_1,\dots,z_k$ be pairwise-disjoint substrings of $z$, and let
$f_1,\dots,f_{k+1}\in\{\epsilon,\#\}$, where $\#\notin\Sigma$ is a special
symbol. A string
$x'=f_1z_1f_2z_2\cdots f_kz_kf_{k+1}$
is called a canonical string.

Let $x\in\Sigma^*$. We say that $x'$ is a canonical form of $x$ if there
exist strings $w_1,\dots,w_{k+1}\in\Sigma^*$ such that
$x=w_1z_1w_2z_2\cdots w_kz_kw_{k+1}$ and, for every
$i\in\{1,\dots,k+1\}$, $f_i=\epsilon$ if and only if $w_i=\epsilon$, while
$f_i=\#$ if and only if $w_i\neq\epsilon$.

We refer to $z_1,\dots,z_k$ as the $z$-blocks and to
$w_1,\dots,w_{k+1}$ as the foreign blocks. In the canonical form $x'$,
each nonempty foreign block is represented by the single symbol $\#$,
whereas an empty foreign block is represented by $\epsilon$.
\end{definition}

\begin{definition}
Let $x\in\Sigma^*$ and let
$x'=f_1z_1f_2z_2\cdots f_kz_kf_{k+1}$
be a canonical form of $x$, witnessed by the decomposition
$x=w_1z_1w_2z_2\cdots w_kz_kw_{k+1}$.

A cut position $p\in\{0,\dots,|x|\}$ is said to lie inside the foreign
block $w_i$ if the cut occurs strictly between two symbols of $w_i$.
More precisely, let
$d_i=|w_1z_1w_2z_2\cdots w_{i-1}z_{i-1}|$.
Then $p$ lies inside $w_i$ if
$d_i<p<d_i+|w_i|$.
\end{definition}

\begin{definition}
Let $\Sigma$ be an alphabet and let $\#\notin\Sigma$ be a special symbol.
For any string $x\in(\Sigma\cup\{\#\})^*$, let $\mathrm{norm}(x)$ be the
string obtained by replacing every maximal run of consecutive $\#$ symbols
by a single $\#$.

For strings $x_1,\dots,x_m\in(\Sigma\cup\{\#\})^*$, we define their
normalized concatenation as
$x_1||x_2||\cdots||x_m=\mathrm{norm}(x_1x_2\cdots x_m)$.
\end{definition}

\begin{definition}
Let $S=(s_1,\dots,s_n)$ be a translocation sequence with $s_i=(x_i,y_i)\vdash(u_i,v_i)$.
We say that a string $x$ is consumed by the translocation $s_i$ if $x=x_i$ or $x=y_i$.
\end{definition}

\begin{lemma}\label{lemma:depth}
    Given an alphabet $\Sigma$ and $A, B \subset \Sigma^+$ such that $B = \{z\}$ then $TD_{NUC}(A, B) \leq |z| + 1$.
\end{lemma}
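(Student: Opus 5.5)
The plan is to build $z$ greedily from left to right. We keep one working string whose prefix is the part of $z$ built so far, followed by some leftover suffix, and we spend one translocation per symbol of $z$ plus one final translocation to remove the leftover suffix. As in the Observation, we assume $\Sigma_B\subseteq\Sigma_A$; otherwise no $B$-producing sequence exists and the distance is $\infty$, so we are in the meaningful case. Let $q=|z|$. Every symbol $z[k]$ then occurs in some string $a_k\in A$, so we can write $a_k=u_k\,z[k]\,w'_k$.

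\emph{Initial step.} Pick any $y\in A$ and apply $(a_1,y)\vdash_{|u_1|,0}(z[1]w'_1,\;u_1y)$. This yields a string $t_1=z[1]w_1$ with $w_1=w'_1$. If $u_1=\epsilon$ we could even skip this step, but we count it anyway.

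\emph{Inductive step.} Suppose $t_k=z[1\dots k]\,w_k$ is available in one copy, with $k<q$. Apply $(a_{k+1},t_k)\vdash_{|u_{k+1}|,\,k}(u,v)$. By the definition of translocation, $u=z[1\dots k]\,z[k+1]\,w'_{k+1}$ and $v=u_{k+1}w_k$. Setting $t_{k+1}=u$ and $w_{k+1}=w'_{k+1}$ preserves the invariant. After $1+(q-1)=q$ translocations we hold $t_q=z\,w_q$.

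\emph{Final step.} Pick any $x\in A$ and apply $(t_q,x)\vdash_{q,|x|}(x\,w_q,\;z)$. The second output is exactly $z$, since the suffix of $x$ after a prefix of length $|x|$ is empty. This gives $q+1$ translocations in total.

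It remains to check that the sequence is contiguous and $B$-producing, which I expect to be the only delicate point, though routine:
\begin{itemize}
\item Every string consumed is either an initial string of $A$, which has infinitely many copies (so the case $x_i=y_i$ causes no trouble), or the working string $t_k$.
\item Each $t_k$ is produced at step $k$ and consumed exactly once, at step $k+1$. If some $t_k$ happens to coincide with an initial string, the availability condition is even easier to satisfy.
\item The copy of $z$ produced at the last step is never consumed, so $F_n(S,z)>P_n(S,z)$.
\item Empty outputs, such as $\epsilon$, are harmless because they are never used again; the paper already uses such outputs in the reduction.
\end{itemize}
Hence $TD_{NUC}(A,\{z\})\le |z|+1$.
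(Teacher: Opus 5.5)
Your proof is correct and follows essentially the same greedy left-to-right construction as the paper: one translocation per symbol of $z$, plus trimming of the foreign material around it. The only difference is that you strip the leading material with an initial translocation rather than at the end, so you count $1+(q-1)+1$ steps instead of the paper's $(q-1)+2$, and you check the contiguity conditions more explicitly than the paper does.
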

\begin{proof}
    We start by finding a string in $A$ that contains $z[1]$, the first character of $z$. Call this string $r$. We iterate over the rest of the characters of $z$ (from $2$ to $|z|$). At step $i$ we search in $A$ for a string that contains $z[i]$ and perform a translocation with it and $r$ to add $z[i]$ to $r$. After we go through all the characters in $z$, we have 3 cases: $r$ = $z$, $r$ is a prefix/suffix of $z$ or $r$ is a substring of $z$ which is not a prefix/suffix.  Thus, if we are in the first case then we have obtained the target string in exactly $|z| - 1$ steps. In the second case we have to perform one more translocation to remove a suffix/prefix from the string $r$, and in the third case we have to perform two more translocations to remove both a prefix and a suffix.
\end{proof}

\begin{algorithm}[!thb]
\caption{Compute $A'$}
\label{alg:A'}
\begin{algorithmic}[1]
    \State $A'\gets\emptyset$
    \For{$V\in\mathit{SV}$}
        \State $k\gets |V|$ and write $V=\{z_1,\dots,z_k\}$
        \For{$\pi\in P_k$} \Comment{$P_k$ is the set of permutations of length $k$}
        \For{$(f_1,\dots,f_{k+1})\in\{\epsilon,\#\}^{k+1}$}
        \State $x'\gets f_1z_{\pi(1)}f_2z_{\pi(2)}\cdots
        f_kz_{\pi(k)}f_{k+1}$
        \If{$\exists x\in A$ such that $x'$ is a canonical form of $x$}
        \State $A'\gets A'\cup\{x'\}$
        \EndIf
        \EndFor
        \EndFor
    \EndFor
    \State \Return{$A'$}
\end{algorithmic}
\end{algorithm}

Algorithm~\ref{alg:A'} constructs the kernel $A'$ by enumerating every possible canonical string determined by a valid set of pairwise-disjoint $z$-blocks, their left-to-right order, and the positions of the foreign blocks represented by $\#$ symbols. A canonical string is included in $A'$ exactly when it is a canonical form of at least one string in the original input set $A$.

\begin{definition}
    Let $S = (s_1, s_2, \dots, s_n)$ be a $B$-producing translocation sequence. For each $i, 1 \leq i \leq n$, we define $A^{i}_{min}  \subset \Sigma^+$ as the minimal multiset of strings (allowing repetitions) necessary before step $i$ to execute the remaining translocations ($s_i, s_{i+1}, \dots, s_{n}$):
    \begin{itemize}
        \item $A^{n}_{min} = \{x_{n}, y_{n}\}$
        \item $\forall i, 1 \leq i < n, A^{i}_{min}  = A^{i+1}_{min}  \setminus \{u_i, v_i\} \cup \{x_i, y_i\}$
        \item $A^{1}_{min} \subseteq A$
    \end{itemize}
    Since $A^{i+1}_{min}$ is a multiset $\forall i, 1 \leq i < n$, then the operation $A^{i}_{min}  = A^{i+1}_{min}  \setminus \{u_i, v_i\} \cup \{x_i, y_i\}$ removes a single copy of $u_i$ and $v_i$, while adding a new copy of $x_i$ and $y_i$.
\end{definition}

\begin{lemma}\label{lemma:kernel}
Let $S=(s_1,s_2,\dots,s_{|S|})$ be an optimal $B$-producing translocation
sequence. Then there exists a $B$-producing translocation sequence
$S'=(s'_1,s'_2,\dots,s'_{|S'|})$ with $|S'|=|S|$ such that, at every step
$i$, $1\leq i\leq |S'|$, the following properties hold:
\begin{enumerate}
    \item every string in ${A'}^i_{\min}$ is a canonical string;
    \label{p1}

    \item the strings in the multiset ${A'}^i_{\min}$ can be ordered as
    $r_1,\dots,r_m$ so that their normalized concatenation
    $r_1||r_2||\cdots||r_m$ is a canonical string;
    \label{p2}

    \item for every string $x\in A^i_{\min}$, there exists a string
    $x'\in {A'}^i_{\min}$ such that $x'$ is a canonical form of $x$.
    Moreover, if $x$ is consumed by $s_j$ for some
    $i\leq j\leq |S|$, then $x'$ is consumed by $s'_j$.
    \label{p3}
\end{enumerate}
\end{lemma}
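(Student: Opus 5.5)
We argue by reverse induction on the step index $i$, from $i=|S|$ down to $i=1$. We build the steps $s'_{|S|},s'_{|S|-1},\dots,s'_1$ one at a time, and with them the multisets ${A'}^i_{\min}$, keeping properties~\ref{p1}--\ref{p3} true at every stage. Throughout we track a fixed correspondence between the strings of $A^i_{\min}$ and those of ${A'}^i_{\min}$. Concretely, every $x\in A^i_{\min}$ comes with a decomposition $x=w_1z_1\cdots w_kz_kw_{k+1}$ whose $z$-blocks are exactly the pieces of $x$ whose symbols end up in the final copy of $z$. The canonical representative $x'$ is the canonical form of $x$ with respect to this decomposition. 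Since all tracked pieces are distinct occurrences inside the one final copy of $z$, the $z$-blocks of all strings in $A^i_{\min}$ are pairwise disjoint substrings of $z$. This gives property~\ref{p2}: order the strings by the position in $z$ of their first $z$-block, normalized-concatenate them, and the $\#$ runs merge into a canonical string.

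For the base case $i=|S|$, the last step $s_{|S|}=(x,y)\vdash(u,v)$ yields $u=z$ or $v=z$, say $u=y_1x_2=z$. We split $x$ at the cut and take as $z$-blocks the maximal pieces of $x_2$ and $y_1$ that are contained in $z$. Since $y_1$ is a prefix of $z$ and $x_2$ a suffix, each side gives one block adjacent to the cut. So $x'$ is $\#z[p+1..q]$ or $z[p+1..q]$, and $y'$ is $z[1..p]\#$ or $z[1..p]$. The cut in $x'$, $y'$ sits at the matching block boundary, so $s'_{|S|}$ still produces $z$. Properties~\ref{p1}--\ref{p3} follow directly.

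For the inductive step, assume ${A'}^{i+1}_{\min}$ is built, and let $s_i=(x_i,y_i)\vdash(u_i,v_i)$. By~\ref{p3}, the outputs $u_i$ and $v_i$, if still needed, have representatives $u_i'$ and $v_i'$. Their $z$-blocks come from pieces of $x_i$ and $y_i$, and a $z$-block of $u_i$ that straddles the cut splits as a suffix of $y_{i,1}$ joined to a prefix of $x_{i,2}$. We pull these decompositions back to $x_i$ and $y_i$. The $z$-blocks of $x_i$ are those of $v_i$ lying in $x_{i,1}$, together with those of $u_i$ lying in $x_{i,2}$; the pieces of a straddling block that fall inside $x_i$ and $y_i$ are themselves disjoint substrings of $z$. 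Every other symbol becomes foreign. We then set $x_i'$, $y_i'$ to the resulting canonical forms, and we put the cut of $s'_i$ at the corresponding block boundary in each. The cut position in $x_i$ either falls between blocks, or inside a foreign block $w_j$, or at its edge. If it lies strictly inside $w_j$, that foreign block becomes $\#$ on both sides, which is matched by cutting at the $\#$ and duplicating it as needed. Then $s'_i$ outputs exactly $u_i'$ and $v_i'$, up to normalization of merged $\#$. Replacing $\{u_i',v_i'\}$ by $\{x_i',y_i'\}$ preserves~\ref{p1}, the consumption correspondence~\ref{p3}, and, by the disjointness argument above,~\ref{p2}. At $i=1$, each $x'$ is a canonical form of some $x\in A$, so it belongs to $A'$ by Algorithm~\ref{alg:A'}. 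Contiguity and the length $|S'|=|S|$ are inherited one-for-one from $S$.

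The main obstacle is exactly the case of a cut strictly inside a foreign block. There the canonical string has one $\#$ where the original had two nonempty pieces on either side of the cut, and a single translocation on $x_i'$ must still produce strings whose normal forms equal $u_i'$ and $v_i'$. We will first handle it by checking that the whole foreign block may be sent to one side without losing anything. The block contains no tracked symbols, so the output that loses it simply has $\epsilon$ in place of that $\#$, which is consistent with the canonical form of the pulled-back decomposition. If this check fails, we instead define $x_i'$ with an extra split $\#$, and for that variant we will need to verify that $x_i'$ still lies in $A'$.
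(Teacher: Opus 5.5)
Your proof follows the same reverse induction as the paper, and your tracking-based choice of $z$-blocks makes property~2 immediate. There is a genuine gap, however. The case you single out is left open, and neither remedy closes it.

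\textbf{Why it fails.} A translocation preserves the multiset of symbols. So for $s'_i$ to output \emph{exactly} the strings $u'_i,v'_i$ consumed by the later steps you have already fixed, you need $\#(x'_i)+\#(y'_i)=\#(u'_i)+\#(v'_i)$. Your representatives, however, carry exactly one $\#$ per nonempty maximal foreign run. Suppose the cut splits a foreign run $w=w^Lw^R$ of $x_i$, both outputs are needed, and no foreign symbol of $y_i$ is adjacent to the cut in $y_i$. Then the outputs need two $\#$'s, but the inputs supply only one.
\begin{itemize}
\item Remedy (i) leaves one output without a $\#$ where the actual string contains a nonempty foreign piece. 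In general this is not a canonical form of that output. In any case it is not the string that the step $s'_l$, built earlier in the induction, consumes, and you cannot change that string at this point.
\item Remedy (ii) puts $\#\#$ into $x'_i$. That is not a canonical string and is never generated by Algorithm~\ref{alg:A'}, so property~1 fails, and when $x_i\in A$ you get $x'_i\notin A'$.
\end{itemize}

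\textbf{An unflagged case.} Your phrase ``outputs exactly $u'_i,v'_i$, up to normalization'' hides the opposite situation. Translocations do not normalize, so if $y_{i,1}$ ends with and $x_{i,2}$ begins with foreign symbols, $s'_i$ outputs $\cdots\#\#\cdots$ while $u'_i$ has a single $\#$. This happens in optimal sequences. Take $A=\{aeb,cfd\}$ and $z=cadb$. The sequence $(aeb,cfd)\vdash_{1,2}(cfeb,ad)$, $(cfeb,ad)\vdash_{1,0}(feb,cad)$, $(cad,feb)\vdash_{3,2}(fe,cadb)$ is optimal, since two steps do not suffice. Both outputs of the first step are needed. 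Yet any translocation of $a\#b$ and $c\#d$ that produces $ad$ also produces $c\#\#b$, not $c\#b$.

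\textbf{The paper does not supply the missing idea.} Its proof has the same structure and handles these cases only through explicit index formulas that duplicate the $\#$ at the junction. On the example above they give exactly $x'_k=a\#b$ and $y'_k=c\#d$, and hence the output $c\#\#b$.

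\textbf{A repair.} Stop requiring intermediate representatives to be canonical, and build $S'$ forward:
\begin{itemize}
\item Represent each initial copy used by $S$ by its canonical form with respect to the tracked symbols; this lies in $A'$.
\item Map every cut to the corresponding position. A cut strictly inside a foreign run goes to one end of its $\#$-run; this is your remedy (i).
\item Maintain only the invariant that each representative arises from the original string by replacing every maximal untracked run with a run of $\#$'s of arbitrary, possibly zero, length.
\end{itemize}
Every translocation preserves this invariant, and contiguity transfers copy by copy. The representative of the final copy of $z$ contains no untracked run, so it equals $z$. This gives $TD_{NUC}(A',\{z\})\le TD_{NUC}(A,\{z\})$, which is what the kernel argument needs. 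It does not, however, give properties~1 and~3 verbatim at intermediate steps.
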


\begin{proof}
We prove the existence of $S'$ by reverse induction on the step index, starting from $k = |S'| = |S|$ down to $k = 1$. For each $k\in\{1,\dots,|S'|\}$ we maintain the inductive invariant:

$P(k): $ Starting with step $k$, we can simulate all the translocations in $S$ ($s_k, s_{k+1}, \dots, s_{|S|}$) using the same number of translocations ($s'_k, s'_{k+1} \dots s'_{|S|}$) such that the properties~\ref{p1}, \ref{p2}, and~\ref{p3} are satisfied at each step $i \geq k$ and $s'_{|S|}$ produces $z$.

\noindent \textbf{Base case}. Let $|S| = n$ and $s_n = (x_n, y_n) \vdash_{px_n, py_n} (u_n, v_n)$ be the last translocation in the sequence $S$.
Since $S$ is an optimal sequence of translocations then $u_n = z$ or $v_n = z$. Without loss of generality, we assume that $v_n = z$. Thus, $z = x_n[1 \dots px_n]y_n[py_n + 1 \dots |y_n|]$. Depending on the values of $px_n, py_n$, we can define $x'_n$, $y'_n$, $u'_n$ and $v'_n$ as follows and build the translocation $s_n' = (x'_n, y'_n) \vdash_{px_n, py'_n} (u'_n, z)$:

\begin{itemize}
    \item if $px_n \geq 1$ and $py_n < |y_n|$ ($x_n[1 \dots px_n]$ is a proper prefix of $z$ and $y_n[py_n + 1 \dots |y_n|]$ is a proper suffix of $z$)
     \begin{itemize} 
        \item  if $px_n < |x_n|$, then $x_n' = x_n[1 \dots px_n]\# = z[1 \dots px_n]\#$.
            \begin{itemize}
                \item if $py_n = 0$, then  $y_n' = y_n = z[px_n + 1 \dots |z|]$, $py'_n = 0$, $u'_n = \#$ $\implies {A'}^n_{min} = \{z[1\dots px_n]\#, z[px_n + 1 \dots |z|]\}$.

                \item if $py_n > 0$, then $y_n' = \#y_n[py_n + 1\dots|y_n|] = \#z[px_n + 1 \dots |z|]$, $py'_n = 1$, $u'_n = \#$ $\implies {A'}^n_{min} = \{z[1\dots px_n]\#, \#z[px_n + 1 \dots |z|]\}$.
            \end{itemize}
         \item  if $px_n = |x_n|$, then $x_n' = x_n[1 \dots px_n] = z[1 \dots px_n]$
            \begin{itemize}
                \item if $py_n = 0$, then $y_n' = y_n = z[px_n + 1 \dots |z|]$, $py'_n = 0$, $u'_n = \epsilon$ $\implies {A'}^n_{min} = \{z[1\dots px_n], z[px_n + 1 \dots |z|]\}$.
                \item if $py_n > 0$, then $y_n' = \#y_n[py_n + 1\dots|y_n|] = \#z[px_n + 1 \dots |z|]$, $py'_n = 1$, $u'_n = \#$ $\implies {A'}^n_{min} = \{z[1\dots px_n], \#z[px_n + 1 \dots |z|]\}$.
            \end{itemize}
    \end{itemize}

    \item if $px_n = 0$, then $y_n' = \#y_n[py_n + 1 \dots |y_n|] = \#z$, $py'_n = 1$, $u'_n = \#$
    \begin{itemize}
        \item if $|x_n| > 0$, then $x'_n = \# \implies {A'}^n_{min} = \{\#, \#z\}$.
        \item if $|x_n| = 0$, then $x'_n = \epsilon \implies {A'}^n_{min} = \{\epsilon, \#z\}$.
    \end{itemize}
    \item if $py_n = |y_n|$, then $x_n' = x_n[1 \dots px_n]\# = z\#$, $u'_n = \#$
    \begin{itemize}
        \item if $|y_n| > 0$, then $y'_n = \#, py'_n = 1 \implies {A'}^n_{min} = \{z\#, \#\}$.
        \item if $|y_n| = 0$, then $y'_n = \epsilon, py'_n = 0 \implies {A'}^n_{min} = \{z\#, \epsilon\}$.
    \end{itemize}
\end{itemize}

Thus, $s'_n$ produces $z$ and $\forall px_n, py_n$ we have: $x'_n, y'_n$ are canonical strings, $x'_n || y'_n$ is a canonical string, $x'_n$ is a canonical form of $x_n$ and $y'_n$ is a canonical form of $y_n$  $\implies P(n)$ is true.

\noindent \textbf{Inductive step}. $P(k + 1) \implies P(k)$. 

By $P(k{+}1)$, all the strings in ${A'}^{k+1}_{min}$ are canonical and the strings in ${A'}^{k+1}_{\min}$ can be ordered so that their normalized concatenation is a canonical string (alternating pairwise-disjoint, nonempty $z$-blocks and separators from $\{\epsilon,\#\}$ with at most one ``\#'' between two blocks).

Let $s_k = (x_k, y_k) \vdash_{px_k, py_k} (u_k, v_k)$ be the $k$-th translocation operation from $S$. Since $S$ is optimal, at least one of $u_k$ or $v_k$ must be consumed by a subsequent translocation $s_l$ (where $l > k$). Thus, at least one of $u_k$ or $v_k$ must be in ${A}^{k+1}_{min}$ and we distinguish two main cases:
\begin{itemize}
    \item Both $u_k$ and $v_k$ are in ${A}^{k+1}_{min} \implies \exists l_1, l_2 > k$ such that $u_k \in \{x_{l_1}, y_{l_1}\}$ and $v_k \in \{x_{l_2}, y_{l_2}\} \implies \exists u'_k, v'_k \in {A'}^{k+1}_{min}$ such that $u'_k, v'_k$, $u'_k || v'_k$ are canonical strings, $u'_k$ is a canonical form of $u_k$, $u'_k \in \{x'_{l_1}, y'_{l_1}\}$,  $v'_k$ is a canonical form of $v_k$ and $v'_k \in \{x'_{l_2}, y'_{l_2}\}$. 

    \item Only one of the $u_k$, $v_k$ strings is in ${A}^{k+1}_{min}$. Without loss of generality, we assume $v_k \in {A}^{k+1}_{min}  \implies \exists l > k$ such that $v_k \in \{x_{l}, y_{l}\} \implies \exists v'_k \in {A'}^{k+1}_{min}$ such that $v'_k$  is a canonical string, $v'_k$ is a canonical form of $v_k$ and $v'_k \in \{x'_{l}, y'_{l}\}$. Additionally, we set $u'_k=\#$ if $|u_k| > 0$ or $u'_k = \epsilon$ if $|u_k| = 0 \implies u'_k$ is a canonical string, $u'_k$ is a canonical form of $u_k$ and $u'_k || v'_k$ is also a canonical string.
\end{itemize}

    Depending on the values of $px_k, py_k$, we define $x'_k$, $y'_k$, $px'_k, py'_k$ as follows and build the translocation $s_k' = (x'_k, y'_k) \vdash_{px'_k, py'_k} (u'_k, v'_k)$:
    \begin{enumerate}
        \item We obtain the values $px'_k$ and $py'_k$ by comparing the strings
        $u_k$ and $v_k$ with their canonical forms $u'_k$ and $v'_k$, respectively.
        More precisely, let ${u_k^{py_k}}^{\mathit{fb}}$ be the number of $\#$
        symbols in $u'_k$ whose corresponding foreign blocks in $u_k$ contain at
        least one character of the prefix $u_k[1\dots py_k]$, including the foreign
        block containing the cut position if the cut lies inside such a block.
        Let ${u_k^{py_k}}^{\mathit{fc}}$ be the number of characters of
        $u_k[1\dots py_k]$ belonging to these foreign blocks.
        
        Similarly, let ${v_k^{px_k}}^{\mathit{fb}}$ be the number of $\#$ symbols
        in $v'_k$ whose corresponding foreign blocks in $v_k$ contain at least one
        character of the prefix $v_k[1\dots px_k]$, and let
        ${v_k^{px_k}}^{\mathit{fc}}$ be the number of characters of
        $v_k[1\dots px_k]$ belonging to these foreign blocks.
        
        We then set
        $px'_k=px_k+{v_k^{px_k}}^{\mathit{fb}}-{v_k^{px_k}}^{\mathit{fc}}$
        and
        $py'_k=py_k+{u_k^{py_k}}^{\mathit{fb}}-{u_k^{py_k}}^{\mathit{fc}}$.
        \item We define $x'_k$ and $y'_k$ as follows:
        \begin{itemize}
            \item if $1 \leq px_k < |x_k|$ 
                \begin{itemize}
                    \item if $1 \leq py_k < |y_k|$
                        \begin{itemize}
                            \item $x'_k = v'_k[1\dots px'_k]u'_k[py'_k \dots |u'_k|]$ if $py_k$ is inside a foreign block of $u_k$, otherwise $x'_k = v'_k[1\dots px'_k]u'_k[py'_k + 1 \dots |u'_k|]$
                            \item $y'_k = u'_k[1\dots py'_k]v'_k[px'_k \dots |v'_k|]$ if $px_k$ is inside a foreign block of $v_k$, otherwise $y'_k = u'_k[1\dots py'_k]v'_k[px'_k + 1 \dots |v'_k|]$
                        \end{itemize}
                    \item if $py_k = 0$
                        \begin{itemize}
                            \item $x'_k = v'_k[1\dots px'_k]u'_k$
                            \item $y'_k = v'_k[px'_k \dots |v'_k|]$ if $px_k$ is inside a foreign block of $v_k$, otherwise $y'_k = v'_k[px'_k + 1 \dots |v'_k|]$
                        \end{itemize}
                    \item if $py_k = |y_k|$
                        \begin{itemize}
                            \item $x'_k = v'_ku'_k[py'_k \dots |u'_k|]$ if $py_k$ is inside a foreign block of $u_k$, otherwise $x'_k = v'_ku'_k[py'_k + 1 \dots |u'_k|]$
                            \item $y'_k = u'_k[1\dots py'_k]$
                        \end{itemize}
            \end{itemize}
             \item if $px_k = 0$
             \begin{itemize}
                 \item $1 \leq py_k < |y_k|$
                     \begin{itemize}
                            \item $x'_k = u'_k[py'_k \dots |u'_k|]$ if $py_k$ is inside a foreign block of $u_k$, otherwise $x'_k = u'_k[py'_k + 1 \dots |u'_k|]$
                            \item $y'_k = u'_k[1\dots py'_k]v'_k$
                        \end{itemize}
                \item $py_k = |y_k|$
                     \begin{itemize}
                            \item $x'_k = u'_k[py'_k \dots |u'_k|]$ if $py_k$ is inside a foreign block of $u_k$, otherwise $x'_k = u'_k[py'_k  + 1 \dots |u'_k|]$
                            \item $y'_k = u'_k[1\dots py'_k]$
                        \end{itemize}
                \item $py_k = 0$ - this is not a valid case in an optimal translocation sequence since it is producing exactly the same strings
                \end{itemize}
            \item if $px_k = |x_k|$ 
                \begin{itemize}
                    \item $1 \leq py_k < |y_k|$
                         \begin{itemize}
                                \item $x'_k = v'_k[1\dots px'_k]$
                                \item $y'_k = u'_k[1\dots py'_k]v'_k[px'_k \dots |v'_k|]$ if $px_k$ is inside a foreign block of $v_k$, otherwise $y'_k = u'_k[1\dots py'_k]v'_k[px'_k + 1 \dots |v'_k|]$
                            \end{itemize}
                    \item $py_k = 0$
                         \begin{itemize}
                                \item $x'_k = v'_k[1\dots px'_k]$
                                \item $y'_k = v'_k[px'_k \dots |v'_k|]$ if $px_k$ is inside a foreign block of $v_k$, otherwise $y'_k = v'_k[px'_k + 1 \dots |v'_k|]$
                            \end{itemize}
                    \item $py_k = |y_k|$ - this is not a valid case in an optimal translocation sequence since it is producing exactly the same strings
                \end{itemize}
        \end{itemize}
    \end{enumerate}
    For all possible values of $px_k$ and $py_k$, we define $x'_k$ and $y'_k$
    such that $x'_k$ is a canonical form of $x_k$ and $y'_k$ is a canonical
    form of $y_k$. Additionally, since $u'_k$, $v'_k$, and $u'_k||v'_k$ are
    canonical strings, $x'_k$, $y'_k$, and $x'_k||y'_k$ are also canonical
    strings. Moreover, the strings in
    ${A'}^{k}_{\min}=\{x'_k,y'_k\}\cup
    ({A'}^{k+1}_{\min}\setminus\{u'_k,v'_k\})$
    can be ordered so that their normalized concatenation is a canonical
    string. Therefore, $P(k)$ holds. By reverse induction from $k=n$, the lemma
    follows.
\end{proof}

\begin{algorithm}[!thb]
\caption{FPT algorithm for $TD_{NUC}(A',\{z\})$}
\label{alg:backtrack}
\begin{algorithmic}[1]
\State \textbf{Input:} kernel $A'$ (all strings canonical), target string $z$
\State \textbf{Output:} $TD_{NUC}(A',\{z\})$
\State $k \gets |z|$
\State $best \gets \infty$
\State $U_0 \gets \emptyset$ \Comment{multiset of produced strings currently available}

\Procedure{FPT}{$U,\ steps$}
  \If{$z \in A'\cup U$}
    \State $best \gets \min(best,\ steps)$
    \State \Return
  \EndIf
  \If{$steps \ge best$ \textbf{ or } $steps = k + 1$}
    \State \Return \Comment{never explore more than $|z|+1$ steps}
  \EndIf

  \ForAll{$x \in A'\cup U$}
    \ForAll{$y \in A'\cup U$}
      \For{$i \gets 0$ to $|x|$}
        \For{$j \gets 0$ to $|y|$}
          \State $(u,v) \gets$ \Call{Translocate}{$x,y,i,j$}
          \State $U_{\text{new}} \gets U$
          \If{$x \in U$}
              \State remove one copy of $x$ from $U_{\text{new}}$
          \EndIf
          \If{$y \in U_{\text{new}}$}
              \State remove one copy of $y$ from $U_{\text{new}}$
          \EndIf
          \State $U_{\text{new}} \gets U_{\text{new}} \cup \{u,v\}$
          \State \Call{FPT}{$U_{\text{new}},\ steps+1$}
        \EndFor
      \EndFor
    \EndFor
  \EndFor
\EndProcedure

\State \Call{FPT}{$U_0,\ 0$}
\State \Return $best$
\end{algorithmic}
\end{algorithm}

Algorithm~\ref{alg:backtrack} computes $TD_{NUC}(A',\{z\})$ by performing a bounded exhaustive search over contiguous translocation sequences starting from the kernel $A'$. At each search node, it considers every ordered pair of currently available strings and every pair of cut positions, applies the corresponding translocation, removes consumed copies of previously produced strings, and adds the two resulting strings to the multiset of available produced strings. By Lemma~\ref{lemma:depth}, it is sufficient to explore sequences of length at most $|z|+1$, and the minimum depth at which $z$ is obtained is the translocation distance.

\begin{theorem}\label{thm:fpt}
 Algorithms~\ref{alg:A'} and~\ref{alg:backtrack} compute
$TD_{NUC}(A,\{z\})$ in time $e^{O(q^3)}\mathrm{poly}(|I|)$, where $q = |z|$ and $|I|$ is the input size.
\end{theorem}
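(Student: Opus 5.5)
The proof splits into correctness and running time, and both parts mostly invoke earlier results.

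For correctness, Lemma~\ref{lemma:kernel} applied to an optimal $B$-producing sequence $S$ from $A$ gives a sequence $S'$ of the same length. Every string of ${A'}^1_{\min}$ in $S'$ is a canonical form of some string in $A^1_{\min}\subseteq A$. Algorithm~\ref{alg:A'} enumerates every valid set, every ordering and every separator pattern, so each such string lies in $A'$. Hence $TD_{NUC}(A',\{z\})\le TD_{NUC}(A,\{z\})$.

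For the reverse inequality, I would argue that any sequence from $A'$ producing $z$ lifts to a sequence from $A$. Each $x'\in A'$ is a canonical form of some $x\in A$. Replacing every $\#$ by the foreign block it stands for, and mapping cuts at $\#$ boundaries to the corresponding boundaries in $x$, yields outputs whose canonical forms are the produced strings. Since $z$ contains no $\#$, the final output is literally $z$. This establishes equality.

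Next I would show that Algorithm~\ref{alg:backtrack} returns $TD_{NUC}(A',\{z\})$. It enumerates exactly the contiguous sequences: strings from $A'$ are taken with unlimited multiplicity, and produced strings are consumed from the multiset $U$. By Lemma~\ref{lemma:depth} applied to $A'$ (noting $\Sigma_z$ is covered, since single characters of $z$ appear as blocks), the optimum is at most $q+1$, so the depth cutoff loses nothing.

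For running time, the first step is bounding $|A'|$. A canonical string is determined by a valid set $V$ of at most $q$ substrings, an ordering, and a vector in $\{\epsilon,\#\}^{k+1}$. A crude count gives at most $(q^2)^{q}\cdot q!\cdot 2^{q+1}=e^{O(q\log q)}$, which is within the paper's claimed $2e^{4(q+1)^2}$; I would just use the paper's bound. Computing $A'$ costs this count times a canonical-form test per $x\in A$. The test decides whether $x$ contains $z_{\pi(1)},\dots,z_{\pi(k)}$ as disjoint occurrences in order, with prescribed emptiness of the gaps, and a left-to-right DP over positions of $x$ does this in $\mathrm{poly}(|x|,q)$ time. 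This gives $e^{O(q^2)}\mathrm{poly}(|I|)$ for the first phase.

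In the search, every string ever present has length at most about $2(2q+1)$. More precisely, lengths stay $O(q)$ because each translocation output is made of pieces of inputs, and by induction on depth $\le q+1$ lengths are at most $O(q^2)$; this is still polynomial. At each node there are at most $(|A'|+2(q+1))^2$ pairs and $O(q^4)$ cut pairs, so the branching factor is $b=e^{O(q^2)}$. With depth $q+1$, the tree has $b^{q+1}=e^{O(q^3)}$ nodes, each processed in $\mathrm{poly}(q)$ time, so the total is $e^{O(q^3)}\mathrm{poly}(|I|)$.

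The main obstacle is the lifting direction $TD(A,\{z\})\le TD(A',\{z\})$. The kernel lemma only gives the other direction. Cuts inside a $\#$ in a kernel string must be realized in the original, and strings produced by the search may contain adjacent $\#$ that correspond to concatenated foreign blocks. I would handle this by maintaining, for each kernel-side string, an original string whose normalized canonical image it is, and checking that each translocation commutes with this map.
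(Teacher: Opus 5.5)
Your argument has the same skeleton as the paper's proof: Lemma~\ref{lemma:kernel} for the kernel, Lemma~\ref{lemma:depth} for the depth cutoff, exhaustive search, and then the bound on $|A'|$, the branching factor and depth $q+1$. At several points, however, you are more careful than the paper, and rightly so.

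\textbf{The reverse inequality.} The paper asserts $TD_{NUC}(A,\{z\})=TD_{NUC}(A',\{z\})$ directly from Lemma~\ref{lemma:kernel}, but that lemma only gives $TD_{NUC}(A',\{z\})\le TD_{NUC}(A,\{z\})$. Your claim that every string of ${A'}^1_{\min}$ is a canonical form of a string of $A$ is the converse of property~3 as stated; it holds by the construction in the proof, not by the statement. The paper never proves the reverse inequality, and your lifting argument supplies it.

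State its invariant as a substitution rather than a normalization: each kernel-side string arises from its original by replacing pairwise disjoint nonempty segments with a single $\#$ each, keeping all other symbols verbatim. Strings such as $a\#\#b$ do arise in the search, so a ``normalized canonical image'' is not preserved. The substitution invariant, by contrast, is preserved trivially: $\#$ is one symbol, so every kernel cut has a unique counterpart in the original. Contiguity then transfers copy by copy, and an unconsumed copy of $z$ lifts to $z$.

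\textbf{Lengths and counts.} The paper applies its bound of $2q+2$ cut positions to produced strings, which need not be canonical. Your $O(q^2)$ length bound is what is actually needed. Justify it not by per-string induction but by $|u|+|v|=|x|+|y|$: the total length of $U$ then grows by at most $2(2q+1)$ per step. Your own count $|A'|\le e^{O(q\log q)}$ would even give $e^{O(q^2\log q)}$ overall. Your remark that Lemma~\ref{lemma:depth} applies to $A'$, because each character of $z$ survives as a one-symbol block, is another point the paper leaves implicit.

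\textbf{A step that fails.} One step you share with the paper does not hold for Algorithm~\ref{alg:backtrack} as written: it does not explore exactly the contiguous sequences. Suppose $x=y$ is a produced string with a single copy in $U$ and $x\notin A'$. The first test removes that copy, the test $y\in U_{\mathrm{new}}$ then silently fails, and the translocation is performed anyway, although a CTS requires two available copies.

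For $A=\{a,b\}$ and $z=abab$ (so $A'=\{\#,a,b\}$), the search creates $ab$ and then applies $(ab,ab)\vdash_{2,0}$, returning $2$. The true distance is $3$: after one step the produced strings have total length at most $2$, so the inputs of any second step have total length at most $3$. An explicit availability check repairs this, and your exhaustiveness claim then holds for the repaired search.
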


\begin{proof}

\textbf{Correctness}:
By Lemma~\ref{lemma:kernel}, $TD_{NUC}(A,\{z\})=TD_{NUC}(A',\{z\})$, so it is enough to prove that Algorithm~\ref{alg:backtrack} returns $TD_{NUC}(A',\{z\})$. In the algorithm, $U$ is the multiset of produced strings currently available; the initial strings in $A'$ are always available. Hence, at a state $U$, the available strings are exactly those in $A'\cup U$. The algorithm performs a bounded search over all valid contiguous translocation sequences from $A'$. Whenever $z\in A'\cup U$, it updates $best$ with the length of the corresponding sequence. By Lemma~\ref{lemma:depth}, $TD_{NUC}(A',\{z\})\leq q+1$, where $q=|z|$. Since the algorithm explores every sequence of length at most $q+1$ by trying all available ordered pairs and all cut positions, while removing consumed produced strings, it explores an optimal sequence. Therefore, it sets $best=TD_{NUC}(A',\{z\})$ and returns $TD_{NUC}(A,\{z\})$.

\noindent \textbf{Complexity}: Fix $k\in\{0,\dots,q\}$ and let $\mathit{SV}_k$ denote the family of valid sets $V\subseteq \mathit{SB}$ with $|V|=k$.
Algorithm~\ref{alg:A'} enumerates, for each $V\in \mathit{SV}_k$, each permutation $\pi\in P_k$ and each tuple $(f_1,\dots,f_{k+1})\in\{\epsilon,\#\}^{k+1}$ and adds at most one string for each such triple that has a witness in $A$. Therefore, $|A'| \le \sum_{k=0}^{q = |z|} |\mathit{SV}_k|\, k!\, 2^{k+1}$ (1). We identify a substring $z[i\dots j]$ with the half-open interval of cut positions $(i-1,\,j]$ in
$\{0,1,\dots,q\}$, where $i-1$ is the cut position immediately before the first character and $j$ is the cut
position immediately after the last character. In particular, the left endpoint is not part of the substring,
while the right endpoint is included. We bound $|\mathit{SV}_k|$ as follows. Any $V\in \mathit{SV}_k$ can be ordered from left to right along $z$, and then it is uniquely
represented by cut positions
$0\le e_1<e_2\le e_3<e_4\le \cdots \le e_{2k-1}<e_{2k}\le q$,
where the $t$-th block corresponds to the interval $(e_{2t-1},\,e_{2t}]$, i.e.\ to the substring
$z[e_{2t-1}+1 \dots e_{2t}]$. Define shifted endpoints $e'_{2t-1}=e_{2t-1}+(t-1)$ and $e'_{2t}=e_{2t}+(t-1)$ for $t=1,\dots,k$. Then,
$0\le e'_1 < e'_2 < \cdots < e'_{2k}\le q + (k - 1)$.
Hence the map $V \mapsto \{e'_1, \dots, e'_{2k}\}$ is an injection into the family of $2k$-subsets of $\{0,1,\dots,q+k-1\}$.
Consequently, $|\mathit{SV}_k| \le \binom{q+k}{2k}$ (2). Combining (1) and (2) gives $|A'| \le \sum_{k=0}^{q}\binom{q + k}{2k} k! 2^{k+1}$ (3). Using $\binom{q+k}{2k}\le \frac{(q+k)^{2k}}{(2k)!}$ and $(2k)!\ge 2^k (k!)^2$ (since
$\binom{2k}{k} = \frac{(2k)!}{(k!)^2} \ge 2^k$), we obtain for each $k$:
$\binom{q+k}{2k} k! 2^{k+1}
\le \frac{(q+k)^{2k}}{(2k)!}\, k!\, 2^{k+1}
\le (q+k)^{2k} \frac{k! 2^{k+1}}{2^k(k!)^2}
= 2 \frac{(q+k)^{2k}}{k!}$ (4). Since $k \le q$, we have $q + k \le 2(q + 1)$. Hence
$\binom{q+k}{2k} k! 2^{k+1}
\le 2 \frac{(2(q + 1))^{2k}}{k!}
= 2 \frac{(4(q+1)^2)^k}{k!}$ (5). Therefore, $o'=|A'|
\le 2 \sum_{k=0}^{q}\frac{(4(q+1)^2)^k}{k!}
\le 2 \sum_{k = 0}^{\infty}\frac{(4(q+1)^2)^k}{k!}
= 2 e^{4 (q + 1)^2}$ (6). The construction of $A'$ requires checking, for each generated canonical string, whether it has a witness in $A$. Each such check can be performed in polynomial
time in the input size $|I|$. Hence the construction of $A'$ takes
$e^{O(q^2)}\mathrm{poly}(|I|)$ time. Every canonical string over $z$ contains at most $2q + 1$ symbols, hence it has at most $2q + 2$ cut positions. At any recursion node of depth at most $q+1$, the multiset $U$ contains at most
$2(q+1)$ produced strings. Hence the number of available strings is at most
$o'+2(q+1)$, where $o'=|A'|$. For each ordered pair of available strings, the
algorithm tries at most $(2q+2)^2$ pairs of cut positions. Therefore, the
branching factor $\beta$ satisfies $\beta \leq (o'+2(q+1))^2(2q+2)^2$ (7). The recursion depth is at most $q+1$, and the work performed at each node is polynomial in $|I|$. Hence $T(q)=O(\beta^{q+1}\mathrm{poly}(|I|))$ (8). Using (6) in (7), we obtain $\beta \leq (2e^{4(q+1)^2}+2(q+1))^2(2q+2)^2 \leq 64(q+1)^2e^{8(q+1)^2}$ (9). Substituting (9) into (8) gives
$T(q)=O((64(q+1)^2e^{8(q+1)^2})^{q+1}\mathrm{poly}(|I|))
=e^{O(q^3)}\mathrm{poly}(|I|)$. Together with the construction of $A'$, the total running time is $e^{O(q^3)}\mathrm{poly}(|I|)$.
\end{proof}

\section{Open problems}

Several questions remain open. In particular, it would be interesting to study the parameterized complexity of the problem with respect only to the translocation distance. Another natural direction is to determine whether the FPT result for the non-uniform contiguous variant with $|B|=1$ extends to larger target sets, and whether analogous FPT algorithms can be obtained for the non-uniform non-contiguous variant. It would also be interesting to determine the complexity of both variants over fixed alphabets, in particular over a binary alphabet or the DNA alphabet $\{A,C,G,T\}$, since our hardness reductions use an alphabet whose size grows with the input instance. Finally, another direction is to design approximation algorithms for the non-uniform non-contiguous variant.

\bibliography{bibl}

\end{document}